\newcommand{\F}{\mathbb{F}}
\renewcommand{\L}{\mathcal{L}}
\newcommand{\U}{\mathcal{U}}
\newcommand{\A}{\mathcal{A}}
\newcommand{\rk}{\mathrm{rk}}
\newcommand{\C}{\mathcal{C}}
\newcommand{\colsp}{\operatorname{colsp}}
\renewcommand{\epsilon}{\varepsilon}
\renewcommand{\P}{\mathcal{P}}
\newcommand{\D}{\mathcal{D}}
\renewcommand{\S}{\mathbf{S}}
\newcommand{\M}{\mathcal{M}}
\DeclareMathOperator{\rowsp}{rowsp}
\newif\ifcomment
\theoremstyle{plain} 
\newtheorem{theorem}{Theorem}
\newtheorem{corollary}[theorem]{Corollary}
\newtheorem{lemma}[theorem]{Lemma}
\newtheorem{proposition}[theorem]{Proposition}
\theoremstyle{definition}
\newtheorem{definition}[theorem]{Definition}
\newtheorem{remark}[theorem]{Remark}
\newtheorem{example}[theorem]{Example}
\begin{document}

\title{A $q$-Polymatroid Framework for Information Leakage in Secure Linear Network Coding}
\author{Eimear Byrne, Johan Vester Dinesen, Ragnar Freij-Hollanti, Camilla Hollanti
\thanks{J.V.~Dinesen, R.~Freij-Hollanti, and  C.~Hollanti are with the Department of Mathematics and Systems Analysis, Aalto University, Espoo, Finland. Eimear Byrne is with the School of Mathematics and Statistics, University College Dublin, Dublin, Ireland.}
\thanks{Preliminary results were published in the Proceedings of the 2025 IEEE International Symposium on Information Theory \cite{dinesen2025secretsharingrankmetric}.}
\thanks{This work has emanated from research conducted with the ﬁnancial support of the European Union MSCA Doctoral Networks (HORIZON-MSCA-2021-DN-01, Project 101072316). C. Hollanti was supported by the Business Finland Co-Innovation Consortium (Grant No. BFRK/473/31/2024, 5845483) and by the TUM Global Visiting Professor Program as part of the Excellence Strategy of the federal and state governments of Germany.}
}


\maketitle

\begin{abstract}
We study information leakage in secure linear network coding schemes based on nested rank-metric codes. We show that the amount of information leaked to an adversary that observes a subset of network links is characterized by the conditional rank function of a representable $q$-polymatroid associated with the underlying rank-metric code pair. Building on this connection, we introduce the notions of $q$-polymatroid ports and $q$-access structures and describe their structural properties. Moreover, we extend Massey's correspondence between minimal codewords and minimal access sets to the rank-metric setting and prove a $q$-analogue of the Brickell--Davenport theorem.

\end{abstract}

\section{Introduction}

Network coding enables intermediate nodes in a communication network to transmit linear combinations of incoming messages rather than simply forwarding them. This approach enhances throughput, reliability, and robustness in multicast communication~\cite{informationflow,linearnetworkcoding}. However, beyond efficiency, modern communication systems must also ensure confidentiality. Ensuring that network transmissions remain secure even in the presence of eavesdroppers gives rise to the problem of secure network coding.

A classical foundation for this problem is the Wiretap~II channel introduced by Ozarow and Wyner~\cite{OzarowWyner}, where an adversary can observe a limited number of transmitted symbols. In this setting, linear coset codes provide secrecy guarantees by ensuring that each observation reveals no information about the secret message. This model naturally extends to network scenarios in which the eavesdropper wiretaps a subset of communication links \cite{securenetwork}. Secure linear network coding builds on this idea. In such constructions, the algebraic structure of rank-metric coset codes ensures universality with respect to the underlying network topology and robustness against rank deficiencies and transmission errors~\cite{4608992,Universal,Emina}.

Beyond their role in secure communication, these coding schemes admit a rich combinatorial interpretation. Wiretap~II schemes based on linear Hamming-metric coset codes are equivalent to \emph{linear secret sharing schemes} (LSSS)~\cite{shamir79,Blakley,10.1007/978-3-642-20901-7_2,JunKURIHARA2012,1397965}. In an LSSS, the information held by each participant corresponds to a share derived from a codeword, and the qualified sets of participants are characterized by the linear dependencies among these shares. Matroids often appear when linear dependence plays a structural role, which makes them a convenient object to use for several problems in information theory and coding theory \cite{whitney35,Oxley}. A linear code determines a representable matroid whose rank function captures which sets of coordinates support non-trivial codewords; conversely, many code parameters and duality properties can be expressed purely in matroidal terms \cite{greene}. This perspective is particularly useful in secret sharing, where qualified and unqualified coalitions correspond to dependent and independent sets in the underlying matroid, and where circuits describe minimal reconstruction sets \cite{farre_padro2007,Cramer_Damgaard_Nielsen_2015}. Related ideas also arise in distributed storage, where locality and repair groups can be modelled through matroid rank constraints, providing a combinatorial way to describe which subsets of nodes suffice for recovery \cite{Freij-Hollanti2018}. Furthermore, every LSSS corresponds to a \emph{port} of a representable matroid; see~\cite{farre_padro2007} for further details. Matroid ports were first introduced by Lehman~\cite{Lehman} in connection with a game-theoretic problem posed by Shannon. In this setting, the secret is identified with a particular element of the matroid's ground set and the qualified coalitions correspond to circuits containing that element. Brickell and Davenport~\cite{brickell_davenport} showed that every ideal access structure arises from a matroid port. Massey~\cite{Massey1999MinimalCA} later strengthened the connection between coding theory and secret sharing by showing that the minimal qualified sets in an LSSS correspond precisely to certain minimal codewords in the dual code, thus establishing a direct link between the algebraic properties of the code and the combinatorial access structures.

Together, these classical correspondences illustrate how the combinatorial structure of matroids underlies secure communication and secret sharing. Extending this perspective to the rank-metric setting naturally lends itself to the theory of $q$-matroids. These were introduced to the coding theory community by Jurrius and Pellikaan~\cite{JurriusPellikaan2018}, while subsequent work on $q$-polymatroids further established their relationship with rank-metric codes ~\cite{Gorla_2019,DBLP:journals/corr/abs-2104-06570,Shiromoto}. In this framework, the invariants of a rank-metric code, such as generalized rank weights, duality, and the maximum rank distance (MRD) property, admit natural combinatorial interpretations through the associated $q$-polymatroid. Consequently, the ports of representable $q$-polymatroids form a natural framework for studying the information leakage of linear network coding schemes based on nested rank-metric codes, laying the foundation for the developments presented in this work.

\subsection{Contributions and Related Work}
This work develops a $q$-polymatroidal framework for studying information-theoretic security in linear network coding using nested linear rank-metric code pairs. Our main contributions are as follows.
\begin{itemize}
\item We show that the information leakage in secure linear network coding with nested rank-metric coset codes is determined by the conditional rank function of a representable $q$-polymatroid. 
\item We extend Massey’s link between minimal codewords and minimal reconstructing sets to vector rank-metric codes. Minimal reconstructing spaces arise from minimal dual codewords; the condition is sufficient in general and becomes necessary when the dual code is minimal, with respect to rank support.
\item We prove a $q$-analogue of the Brickell–Davenport theorem by introducing ports and access structures for $q$-polymatroids and showing that, with certain assumptions, the restriction of the underlying structure to a suitable hyperplane is a $q$-matroid.
\end{itemize}

\subsection{Organization}
The paper is organized as follows. Section \ref{sec:prelim} provides the necessary preliminaries on $q$-polymatroids, rank-metric codes, and linear network coding and nested coset coding with rank-metric codes. Section \ref{sec:leakage} describes information leakage associated to nested coset codes via $q$-polymatroids, which motivates Section \ref{sec:ports} where we define $q$-polymatroid ports and $q$-access structures and describe their properties. We conclude by summarizing the results and discussing future directions in Section \ref{sec:conclusions}.

\section{Preliminaries} \label{sec:prelim}

Throughout, $q$ denotes a fixed prime power and $\mathbb{F}_q$ denotes the finite field of order $q$. We let $E$ denote a fixed $n$-dimensional vector space over~$\mathbb{F}_q$. Given a vector space $A$ over $\mathbb{F}_q$, we let $\mathcal{L}(A)$ denote the lattice of subspaces of $A$, ordered by inclusion, and we let $\P(A)$ denote the set of one-dimensional subspaces contained in $A$. 
For $V \leq A$, let $V^{\perp A}$ denote the orthogonal complement of $V$ with respect to a fixed non-degenerate symmetric bilinear form $\langle \cdot , \cdot \rangle$ on $A$, i.e
\[
    V^{\perp A} = \{ w\in A \,|\, \langle v,w\rangle = 0 \text{ for all } v \in V\}.
\]
If the ambient space $A$ is clear from context, we simply write $V^\perp$.

\subsection{\texorpdfstring{$q$}{q}-Polymatroids}
In \cite{Gorla_2019,Shiromoto} the authors study $q$-polymatroids as $q$-analogues of a polymatroids, thereby extending the classical notion of matroids into the subspace setting. Associated to each rank-metric code is a pair of $q$-polymatroids. It has been shown that key structural invariants such as generalized weights, the MRD property, characterization of optimal anticodes, and MacWilliams duality can be captured through this combinatorial framework. We recall the definitions and results on $q$-polymatroids that will be used throughout. For broader surveys and perspectives, see \cite{DBLP:journals/corr/abs-2104-06570, BYRNE2024105799, vertigan}.

\begin{definition} A (rational) \textit{$q$-polymatroid} is a pair $\mathcal{M} = (E,\rho)$ where $\rho\colon \mathcal{L} (E) \rightarrow \mathbb{Q}_{\geq 0}$ is a \textit{rank function} such that the following axioms hold.
\begin{enumerate}
    \item[(R1)] \textit{Boundedness}: $0\leq \rho(V) \leq \dim V$ for all $V\leq E$.
    \item[(R2)] \textit{Monotonicity}: $\rho(V)\leq \rho(W)$ for all $V,W\leq E$ with $V\leq W$.
    \item[(R3)] \textit{Submodularity}: $\rho(V + W) + \rho(V\cap W ) \leq \rho(V) + \rho(W)$ for all $V,W\leq E$.
\end{enumerate}
The \emph{rank} of $\M$ is $\rho(E)$. \label{def:qpolymatroid}
\end{definition}

With respect to Definition~\ref{def:qpolymatroid}, $q$-matroids are the integral-valued specialization of $q$-polymatroids and play a central role in the applications considered in this work. Specifically, a $q$-polymatroid $\M=(E,\rho)$ such that $\rho\colon \mathcal{L}(E)\rightarrow \mathbb{N}_0$ is called a \textit{$q$-matroid}. Furthermore, for any $q$-polymatroid $\mathcal{M} = (E,\rho)$ and for any $V,W\leq E$ define $$\rho(V\,|\,W) \coloneqq \rho(V+W) - \rho(W)$$ to be the \textit{conditional rank of $V$ given $W$}.

The following definition provides an example of one of the simplest families of $q$-matroids.

\begin{definition} Let $k\in \mathbb{N}$ such that $k\leq n = \dim E$. The $q$-matroid $\U_{n,k}=(E,\rho)$ such that $\rho(V) = \min \{ \dim V, k\}$ is the \textit{uniform $q$-matroid of rank $k$}. \label{def:unif}
\end{definition}

As in the matroid case, there are many axiomatic definitions of $q$-matroids, often referred to as cryptomorphisms (see e.g.\cite{BYRNE2022149}, \cite{gordonmcnulty}). In general, such axiomatic descriptions for $q$-polymatroids are more complicated and require further information. The following definitions are provided specifically for $q$-matroids, which extend the classical matroid notions to the subspace setting.

\begin{definition} Let $\mathcal{M} = (E,\rho)$ be a $q$-matroid and $V\leq E$. 
\begin{enumerate}
    \item $V$ is an \textit{independent space} of $\M$ if $\rho(V) = \dim V$.
    \item $V$ is a \textit{dependent space} of $\M$ if it is not independent.
    \item $V$ is a \textit{circuit} of $\M$ if it is dependent and all proper subspaces of $V$ are independent.
    \item $V$ is a \textit{basis} of $\M$ if it is independent and is not properly contained in any independent space of $\M$. In particular, $\dim(V)=\rho(V)=\rho(E)$.
\end{enumerate}
\end{definition}

Next we recall standard operations on $q$-polymatroids, which play a central role in structural results.

\begin{definition}
    Let $\mathcal{M} = (E,\rho)$ be a $q$-polymatroid. Define $\rho^*(V) \coloneqq \dim V - \rho(E) + \rho(V^{\perp})$ for all $V\leq E$. Then $\mathcal{M}^* \coloneqq (E,\rho^*)$ is a $q$-polymatroid called the \textit{dual} of $\mathcal{M}$, and $(\mathcal{M}^*)^* = \mathcal{M}$.
\end{definition}

\begin{definition} \label{def:equivalentmatroids}
Let $\mathcal{M}_i = (E_i, \rho_i)$ be $q$-polymatroids for $i=1,2$. Then $\mathcal{M}_1$ and $\mathcal{M}_2$ are \textit{equivalent}, denoted by $\mathcal{M}_1 \simeq \mathcal{M}_2$, if there exists an $\mathbb{F}_q$-linear isomorphism $\phi \colon E_1 \rightarrow E_2$ such that $\rho_2(\phi(V)) = \rho_1(V)$ for all $V\leq E_1$. 
\end{definition}

\begin{definition} Let $\mathcal{M} = (E,\rho)$ be a $q$-polymatroid and let $Z\leq E$. Then $\mathcal{M}|_Z \coloneqq (Z,\rho|_{Z})$ where $\rho|_{Z}(V) \coloneqq \rho(V)$ for all $V\leq Z$ is a $q$-polymatroid, called the \textit{restriction of $\mathcal{M}$ to $Z$}. Additionally, let $\pi\colon E\rightarrow E/Z$ denote the canonical projection. Then $\mathcal{M}/Z \coloneqq (E/Z,\rho_{E/Z})$ where $\rho_{E/Z}(V) \coloneqq \rho(\pi^{-1}(V))-\rho(Z)$ is a $q$-polymatroid, called the \textit{contraction of $\mathcal{M}$ by $Z$}.
\end{definition}

\subsection{Rank-Metric Codes}

Let us recall some standard definitions and properties of rank-metric codes. For an overview of rank-metric codes, we refer to \cite{bartz2022rankmetriccodesapplications}. 

\begin{definition}
    A \emph{rank-metric code} is an $\mathbb{F}_q$-linear subspace $\mathcal{C} \leq \mathbb{F}_q^{n \times m}$ equipped with the rank distance $\textup{d}_{\rk}(X, Y) = \mathrm{rk}(X - Y)$
    for any $X, Y \in \mathcal{C}$. The \textit{minimum rank distance} of $\mathcal{C} \neq 0$ is defined as $
    \textup{d}_{\rk}(\mathcal{C}) = \min \{ \mathrm{rk}(X) \mid X \in \mathcal{C}, X \neq 0 \}.$
\end{definition}

A rank-metric analogue of the Singleton bound in the Hamming metric also holds.

\begin{proposition} \label{thm:singleton}
    Let $\mathcal{C} \leq \mathbb{F}_q^{n \times m}$ be a non-zero rank-metric code. Then
    $$
    \dim \mathcal{C} \leq \max\{m, n\} \left( \min\{m, n\} - \textup{d}_{\rk}(\mathcal{C}) + 1 \right).$$
\end{proposition}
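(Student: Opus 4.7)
The plan is to adapt the classical puncturing proof of the Singleton bound to the rank-metric setting. First I would reduce to the case $n \leq m$: transposition $X \mapsto X^{\top}$ is an $\mathbb{F}_q$-linear rank-preserving bijection $\mathbb{F}_q^{n\times m}\to \mathbb{F}_q^{m\times n}$, so $\mathcal{C}^{\top}$ has the same dimension and minimum rank distance as $\mathcal{C}$ with the roles of $n$ and $m$ swapped. Hence it suffices to establish the bound under the assumption $n = \min\{m,n\}$. Set $d = \textup{d}_{\rk}(\mathcal{C})$.

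Next, I would define the $\mathbb{F}_q$-linear puncturing map $\pi \colon \mathbb{F}_q^{n\times m} \to \mathbb{F}_q^{(n-d+1)\times m}$ that retains only the first $n-d+1$ rows of each matrix (equivalently, deletes the last $d-1$ rows). The key step is to verify that $\pi|_{\mathcal{C}}$ is injective: any $X \in \ker(\pi|_{\mathcal{C}})$ has at most $d-1$ nonzero rows, hence $\mathrm{rk}(X) \leq d-1$, and the definition of $d$ as the minimum rank of a nonzero codeword in $\mathcal{C}$ forces $X = 0$. Injectivity then gives $\dim \mathcal{C} = \dim \pi(\mathcal{C}) \leq (n-d+1)m = \max\{m,n\}\bigl(\min\{m,n\} - d + 1\bigr)$, which is the claimed bound.

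There is no substantive obstacle; the argument is essentially mechanical. The only subtlety is the choice of puncturing direction: deleting columns instead of rows would give the weaker inequality $\dim \mathcal{C} \leq n(m - d + 1)$ in the case $n \leq m$, since the shorter dimension controls the multiplicative factor outside the parenthesis. Aligning $n$ with $\min\{m,n\}$ in the WLOG step and then puncturing rows, rather than columns, is precisely what yields the sharper bound appearing in the statement.
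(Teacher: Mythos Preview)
Your argument is correct and is the standard puncturing proof of the rank-metric Singleton bound. Note that the paper does not actually supply a proof of this proposition: it is stated in the preliminaries as a known result and cited as background, so there is no ``paper's own proof'' to compare against. Your write-up would serve perfectly well as a self-contained justification; the only implicit fact you are using without comment is that $d \leq \min\{m,n\}$ (so that $n-d+1 \geq 1$ and the puncturing map is well defined), but this is immediate since any nonzero matrix in $\mathbb{F}_q^{n\times m}$ has rank at most $\min\{m,n\}$.
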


\begin{definition}
    A rank-metric code $\mathcal{C} \leq \mathbb{F}_q^{n \times m}$ is \emph{maximum rank distance (MRD)} if it meets the bound of Proposition \ref{thm:singleton} with equality.
\end{definition}

\begin{definition}
    The \textit{dual code} of a rank-metric code $\mathcal{C} \leq \mathbb{F}_q^{n \times m}$ is
    $$
    \mathcal{C}^\perp = \left\{ X \in \mathbb{F}_q^{n \times m} \mid \mathrm{tr}(X Y^\top) = 0 \text{ for all } Y \in \mathcal{C} \right\},$$
    where $\mathrm{tr(A)}$ denotes the trace of a matrix $A$.
\end{definition}

From a rank-metric code $\mathcal{C} \leq \mathbb{F}_q^{n \times m}$ we can induce a $q$-polymatroid by considering shortened subcodes of $\mathcal{C}$. 
In particular, for any $V\leq \mathbb{F}_q^n$, the \emph{shortened subcode of 
$\mathcal{C}$ with respect to $V$} is
\[
\mathcal{C}(V) = \left\{ X \in \mathcal{C} \mid \operatorname{colsp}(X) \leq V \right\}.
\]

This construction provides the link between rank-metric codes and $q$-polymatroids, which we now recall.

\begin{proposition}\label{prop:dualpoly}
    Let $\mathcal{C} \leq \mathbb{F}_q^{n \times m}$ be a rank-metric code. For any $V \leq \mathbb{F}_q^n$, define
    \[
    \rho_{\mathcal{C}}(V) \coloneqq \frac{\dim \mathcal{C} - \dim \mathcal{C}(V^\perp)}{m}.
    \]
    Then $\mathcal{M}_{\mathcal{C}} = (\mathbb{F}_q^n, \rho_{\mathcal{C}})$ is a $q$-polymatroid and $\mathcal{M}^*_{\mathcal{C}} = \mathcal{M}_{\mathcal{C}^\perp}$.
\end{proposition}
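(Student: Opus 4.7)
The plan is to verify axioms (R1)--(R3) of Definition~\ref{def:qpolymatroid} for $\rho_{\mathcal{C}}$, and then to compute $\rho_{\mathcal{C}}^*$ explicitly and match it against $\rho_{\mathcal{C}^\perp}$.

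First I would dispose of the easy axioms. Monotonicity is immediate: if $V \leq W$, then $W^\perp \leq V^\perp$, hence $\mathcal{C}(W^\perp) \leq \mathcal{C}(V^\perp)$ and so $\rho_{\mathcal{C}}(V) \leq \rho_{\mathcal{C}}(W)$. For boundedness the lower bound follows from $\mathcal{C}(V^\perp) \leq \mathcal{C}$, and for the upper bound I would consider the $\mathbb{F}_q$-linear map that applies the canonical projection $\mathbb{F}_q^n \to \mathbb{F}_q^n / V^\perp$ column-wise to each matrix in $\mathbb{F}_q^{n \times m}$. Its kernel restricted to $\mathcal{C}$ is exactly $\mathcal{C}(V^\perp)$, while its image lies in a space of dimension $m \dim V$, giving $\dim \mathcal{C} - \dim \mathcal{C}(V^\perp) \leq m \dim V$. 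Submodularity needs a little more care: using $(V+W)^\perp = V^\perp \cap W^\perp$ and $(V\cap W)^\perp = V^\perp + W^\perp$, the required inequality rearranges to
\[
\dim \mathcal{C}(A) + \dim \mathcal{C}(B) \leq \dim \mathcal{C}(A \cap B) + \dim \mathcal{C}(A+B),
\]
where $A = V^\perp$ and $B = W^\perp$. This then follows from the identities $\mathcal{C}(A) \cap \mathcal{C}(B) = \mathcal{C}(A \cap B)$ and $\mathcal{C}(A) + \mathcal{C}(B) \leq \mathcal{C}(A + B)$, together with the standard dimension formula for subspace sums.

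For the duality statement, the crucial ingredient is that the \emph{ambient} shortening space $\mathcal{K}(V) := \{X \in \mathbb{F}_q^{n \times m} : \colsp(X) \leq V\}$ satisfies $\mathcal{K}(V)^\perp = \mathcal{K}(V^\perp)$ with respect to the trace form. The containment $\mathcal{K}(V^\perp) \subseteq \mathcal{K}(V)^\perp$ drops out of the identity $\tr(XY^\top) = \sum_j \langle x_j, y_j \rangle$, where $x_j,y_j$ denote the columns of $X$ and $Y$; equality then follows from a dimension count, since $\dim \mathcal{K}(V) = m \dim V$. Once this is in hand,
\[
\mathcal{C}^\perp(V^\perp) = \mathcal{C}^\perp \cap \mathcal{K}(V^\perp) = \mathcal{C}^\perp \cap \mathcal{K}(V)^\perp = (\mathcal{C} + \mathcal{K}(V))^\perp,
\]
and inclusion–exclusion yields $\dim \mathcal{C}^\perp(V^\perp) = nm - \dim \mathcal{C} - m \dim V + \dim \mathcal{C}(V)$. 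Substituting into the definitions and using $\rho_{\mathcal{C}}(\mathbb{F}_q^n) = \dim \mathcal{C} / m$ shows $\rho_{\mathcal{C}^\perp}(V) = \dim V - \dim \mathcal{C}(V)/m = \rho_{\mathcal{C}}^*(V)$, which closes the proof.

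The main obstacle is the trace-form identity $\mathcal{K}(V)^\perp = \mathcal{K}(V^\perp)$; once this is established, both the upper bound in (R1) and the duality claim reduce to routine dimension bookkeeping, and no further subtlety arises.
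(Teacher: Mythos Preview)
Your argument is correct. The paper does not actually prove Proposition~\ref{prop:dualpoly}; it is stated there as a recalled result from the literature (cf.\ \cite{Gorla_2019,Shiromoto}), so there is no ``paper's own proof'' to compare against. What you have written is essentially the standard proof: verifying (R1)--(R3) via the shortening map $V \mapsto \mathcal{C}(V)$, and deriving duality from the identity $\mathcal{K}(V)^\perp = \mathcal{K}(V^\perp)$ together with $(\mathcal{C}+\mathcal{K}(V))^\perp = \mathcal{C}^\perp \cap \mathcal{K}(V)^\perp$. Your computation of $\dim \mathcal{C}^\perp(V^\perp)$ is in fact exactly the content of Lemma~\ref{lem:ccdual} (with $V$ and $V^\perp$ swapped), which the paper derives \emph{from} Proposition~\ref{prop:dualpoly} rather than en route to it, but the two directions are equivalent. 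No gaps.
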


We will use the following lemma in our subsequent results, which follows from the definition of $\rho^*$ and Proposition \ref{prop:dualpoly} (see also \cite{bbds_2018,rav_2016}). 

\begin{lemma}\label{lem:ccdual}
Let $\mathcal{C} \leq \mathbb{F}_q^{n \times m}$ be a rank-metric code and let $V \leq \mathbb{F}_q^n$.
Then $$\dim \C^\perp(V) = \dim \C^\perp-m\dim V^\perp +\dim \C(V^\perp).$$
\end{lemma}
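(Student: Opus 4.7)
The plan is to reduce the identity to a statement about the rank functions of the $q$-polymatroids $\mathcal{M}_{\mathcal{C}}$ and $\mathcal{M}_{\mathcal{C}^\perp}$ and then invoke the duality $\mathcal{M}^*_{\mathcal{C}} = \mathcal{M}_{\mathcal{C}^\perp}$ from Proposition~\ref{prop:dualpoly}. The target quantity $\dim \mathcal{C}^\perp(V)$ appears inside the definition of $\rho_{\mathcal{C}^\perp}$ once we evaluate at $V^\perp$: using $(V^\perp)^\perp = V$ we get
\[
\rho_{\mathcal{C}^\perp}(V^\perp) \;=\; \frac{\dim \mathcal{C}^\perp - \dim \mathcal{C}^\perp(V)}{m},
\]
so it suffices to compute $\rho_{\mathcal{C}^\perp}(V^\perp)$ and rearrange.

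Next I would express $\rho_{\mathcal{C}^\perp}(V^\perp)$ via the dual rank formula. Because $\rho_{\mathcal{C}^\perp} = \rho^*_{\mathcal{C}}$, the definition of the polymatroid dual gives
\[
\rho_{\mathcal{C}^\perp}(V^\perp) \;=\; \dim V^\perp - \rho_{\mathcal{C}}(E) + \rho_{\mathcal{C}}(V).
\]
Substituting the explicit formulas $\rho_{\mathcal{C}}(E) = \dim \mathcal{C} / m$ (which holds since $\mathcal{C}(E^\perp) = \mathcal{C}(0) = 0$) and $\rho_{\mathcal{C}}(V) = (\dim \mathcal{C} - \dim \mathcal{C}(V^\perp))/m$, the $\dim \mathcal{C}$ contributions cancel and the right-hand side becomes $\dim V^\perp - \dim \mathcal{C}(V^\perp)/m$.

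Combining the two expressions for $\rho_{\mathcal{C}^\perp}(V^\perp)$ and clearing the factor $m$ yields
\[
\dim \mathcal{C}^\perp - \dim \mathcal{C}^\perp(V) \;=\; m \dim V^\perp - \dim \mathcal{C}(V^\perp),
\]
which after rearranging is exactly the claimed identity. The only real bookkeeping hazard is keeping track of which arguments are primed with $\perp$: the shortening in the definition of $\rho_{\mathcal{C}}$ uses $V^\perp$, so one has to apply the dual rank formula at $V^\perp$ rather than at $V$ in order for $\dim \mathcal{C}^\perp(V)$ (as opposed to $\dim \mathcal{C}^\perp(V^\perp)$) to appear. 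Once that substitution is made, the rest is a direct algebraic cancellation.
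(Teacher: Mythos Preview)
Your argument is correct and is exactly the route the paper indicates: it states that the lemma ``follows from the definition of $\rho^*$ and Proposition~\ref{prop:dualpoly}'', which is precisely the combination you unpack. Your bookkeeping with the $\perp$'s is accurate, and the algebraic cancellation goes through as written.
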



\begin{definition}
    Let $\mathcal{M} = (E, \rho)$ be a $q$-polymatroid, where $E$ is an $n$-dimensional $\mathbb{F}_q$-vector space. Then $\mathcal{M}$ is called \emph{representable} if there exists a rank-metric code $\mathcal{C} \leq \mathbb{F}_q^{n \times m}$ such that $\mathcal{M} \simeq \mathcal{M}_{\mathcal{C}}$.
\end{definition}

In \cite{dinesen2025secretsharingrankmetric} it was shown that representable $q$-polymatroids also admit a description via the entropy of discrete random variables. We present this result here with proof for self-containment. We let $H(\mathbf{X}_1,\ldots,\mathbf{X}_n)$ denote the joint entropy of a set of discrete random variables $\mathbf{X}_1,\ldots,\mathbf{X}_n$.

\begin{theorem} \label{thm:entropy}
Let $\mathcal{C}\leq \mathbb{F}_q^{n\times m}$ be a rank-metric code. For any $V\leq \mathbb{F}_q^n$ let $\pi_V \colon \mathcal{C}\rightarrow \mathcal{C}/\mathcal{C}(V^{\perp \mathbb{F}_q^n})$ denote the canonical quotient map and $\mathbf{Z}_V$ the random variable on $\mathcal{C}/\mathcal{C}(V^{\perp\mathbb{F}_q^n})$ induced by the uniform distribution on $\mathcal{C}$ and $\pi_V$. Then
\begin{enumerate}
    \item $H(\mathbf{Z}_V)= m\log(q)\rho_{\mathcal{C}}(V)$ for all $V\leq \F_q^n$,
    \item $H(\mathbf{Z}_{V_1},\ldots,\mathbf{Z}_{V_\ell}) = H( \mathbf{Z}_{V_1+ \cdots + V_\ell})$ for all $V_1,\ldots,V_\ell\leq \F_q^n$,
    \item $H(\mathbf{Z}_{W} \,|\, \mathbf{Z}_{V})=m \log(q)\rho_{\mathcal{C}}(W\, |\, V)$ for all $V,W\leq \F_q^n$.
\end{enumerate}
\end{theorem}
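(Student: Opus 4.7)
The plan is to treat all three parts as consequences of the fact that the pushforward of a uniform distribution on a finite $\mathbb{F}_q$-vector space under an $\mathbb{F}_q$-linear map is uniform on the image, so that the joint entropy is simply $\log_2$ of the image size (scaled to the base, here absorbed into $\log(q)$).

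For part (1), I would first observe that $\pi_V$ is $\mathbb{F}_q$-linear and surjective onto $\mathcal{C}/\mathcal{C}(V^{\perp})$. Since the input is the uniform distribution on $\mathcal{C}$ and every fibre of $\pi_V$ is a coset of the kernel $\mathcal{C}(V^{\perp})$ (hence of equal cardinality), the pushforward $\mathbf{Z}_V$ is uniformly distributed on the quotient. Thus
\[
H(\mathbf{Z}_V) = \log \bigl| \mathcal{C}/\mathcal{C}(V^\perp) \bigr| = \bigl(\dim \mathcal{C} - \dim \mathcal{C}(V^\perp)\bigr)\log(q),
\]
and invoking Proposition~\ref{prop:dualpoly} (the definition of $\rho_{\mathcal{C}}$) gives the desired identity.

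For part (2), I would identify the joint random variable $(\mathbf{Z}_{V_1},\ldots,\mathbf{Z}_{V_\ell})$ with the pushforward of the uniform distribution on $\mathcal{C}$ under the $\mathbb{F}_q$-linear map
\[
\Phi\colon \mathcal{C} \;\longrightarrow\; \bigoplus_{i=1}^{\ell} \mathcal{C}/\mathcal{C}(V_i^\perp), \qquad X \mapsto (\pi_{V_1}(X),\ldots,\pi_{V_\ell}(X)).
\]
By the same uniformity argument as in part (1), the joint entropy equals $\log|\mathrm{im}(\Phi)| = \log|\mathcal{C}/\ker(\Phi)|$. The key step (and, I expect, the main technical point of the proof) is to identify the kernel. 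Directly from the definition of the shortened subcode,
\[
\ker(\Phi) = \bigcap_{i=1}^{\ell} \mathcal{C}(V_i^\perp) = \bigl\{X \in \mathcal{C} : \colsp(X) \leq V_i^\perp \text{ for all } i \bigr\} = \mathcal{C}\Bigl(\bigcap_{i=1}^{\ell} V_i^\perp\Bigr),
\]
and since $\bigcap_i V_i^\perp = (V_1 + \cdots + V_\ell)^\perp$, this coincides with $\mathcal{C}\bigl((V_1+\cdots+V_\ell)^\perp\bigr) = \ker(\pi_{V_1+\cdots+V_\ell})$. Hence $|\mathrm{im}(\Phi)| = |\mathcal{C}/\mathcal{C}((V_1+\cdots+V_\ell)^\perp)|$, which by part (1) is exactly $H(\mathbf{Z}_{V_1+\cdots+V_\ell})$.

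For part (3), I would apply the chain rule for entropy $H(\mathbf{Z}_W \mid \mathbf{Z}_V) = H(\mathbf{Z}_W,\mathbf{Z}_V) - H(\mathbf{Z}_V)$, then use part (2) with $\ell=2$ to rewrite the joint term as $H(\mathbf{Z}_{V+W})$, and finally apply part (1) to obtain
\[
H(\mathbf{Z}_W\mid \mathbf{Z}_V) = m\log(q)\bigl(\rho_{\mathcal{C}}(V+W) - \rho_{\mathcal{C}}(V)\bigr) = m\log(q)\,\rho_{\mathcal{C}}(W\mid V),
\]
matching the definition of conditional rank given immediately after Definition~\ref{def:qpolymatroid}. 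The only non-routine ingredient across the whole argument is the kernel computation in part (2), and even that reduces to the standard fact $(V_1+\cdots+V_\ell)^\perp = \bigcap_i V_i^\perp$ combined with the monotone behaviour of $\mathcal{C}(\cdot)$ under intersections.
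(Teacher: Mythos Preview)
Your proposal is correct and follows essentially the same approach as the paper's own proof: both identify the pushforward as uniform on the image (the paper does this via an explicit probability computation, you via the equal-fibre observation), both define the same product map $\Phi$ (the paper calls it $\sigma$) and compute its kernel as $\bigcap_i \mathcal{C}(V_i^\perp) = \mathcal{C}\bigl((\sum_i V_i)^\perp\bigr)$, and both derive part~(3) from the chain rule together with parts~(1) and~(2). The only difference is cosmetic presentation.
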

\begin{proof}
Throughout the proof let $\bot = \bot \mathbb{F}_q^n$. 
Then
    \begin{align*}
     H(\mathbf{Z}_V) &= -\:\:\sum_{\mathclap{X\in \mathcal{C}/\mathcal{C}(V^\perp)}}\:\: \mathbb{P}(X) \log (\mathbb{P}(X)) \\
    &= -\:\:\sum_{\mathclap{X\in \mathcal{C}/\mathcal{C}(V^\perp)}}\:\: \frac{q^{\dim \ker \pi_V}}{q^{\dim (\mathcal{C})}} \log \left( \frac{q^{\dim \ker \pi_V}}{q^{\dim (\mathcal{C})}}\right)\\
    &= -q^{\dim \mathcal{C}/\mathcal{C}(V^\perp)}\frac{q^{\dim \ker \pi_V}}{q^{\dim (\mathcal{C})}} \log \left( \frac{q^{\dim \ker \pi_V}}{q^{\dim (\mathcal{C})}}\right) \\
    &= -q^{\dim \mathcal{C}/\mathcal{C}(V^\perp)}q^{-\dim \mathcal{C}/\mathcal{C}(V^\perp)} \log\left( q^{-\dim \mathcal{C}/\mathcal{C}(V^\perp)}\right) \\
    &= (\dim \mathcal{C} - \dim \mathcal{C}(V^\perp)) \log(q)\\ 
    &= m\log(q)\rho_C(V),
\end{align*}
which proves 1). 

For 2), let $\sigma$ be the map $\sigma \colon \mathcal{C} \rightarrow \prod^{\ell}_{i=1}\mathcal{C}/\mathcal{C}(V_i^\perp)$ defined by $\sigma(X)= (\pi_{V_1}(X),\ldots,\pi_{V_\ell}(X))$ for all $X \in \mathcal{C}$. This map is clearly $\mathbb{F}_q$-linear and satisfies $\ker \sigma = \cap^{\ell}_{i=1}\ker \pi_{V_i} = \ker \pi_{\sum_{i=1}^\ell V_i}$. The result then follows by the definition of joint entropy, as $\mathrm{rk}\, \sigma = \mathrm{rk}\, \pi_{\sum^{\ell}_{i=1} V_i}$ and for any $Y=(Y_1,\ldots,Y_\ell)\in \mathrm{im}\, \sigma$, we have that $\mathbb{P}(Y) = q^{\dim \ker \sigma - \dim \mathcal{C}}$. Finally, by the definition of conditional entropy, we have that 3) follows immediately from 1) and 2).\qedhere
\end{proof}

An alternative viewpoint considers $\mathbb{F}_{q^m}$-linear codes, known as vector rank-metric codes.

Let $\Pi = \{\gamma_1, \ldots, \gamma_m\}$ be a basis of $\mathbb{F}_{q^m}$ over $\mathbb{F}_q$. For $v \in \mathbb{F}_{q^m}^n$, let $\Pi(v) \in \mathbb{F}_q^{n \times m}$ denote the matrix whose $(i, j)$-th entry is the $j$-th coordinate of $v_i$ with respect to the basis $\Pi$. The map 
$v \mapsto \Pi(v)$ defines an $\mathbb{F}_q$-linear isomorphism from 
$ \mathbb{F}_{q^m}^n$ to $\mathbb{F}^{m\times n}$. For an 
$\mathbb{F}_{q^m}$ subspace $A \leq \mathbb{F}_{q^m}^n$, define $\Pi(A) = \{ \Pi(v) \mid v \in A \}$. It is well-known that the minimum rank distance of 
    $\Pi(A)$ is independent of the choice of basis $\Pi$. Moreover, for any such basis, we have
    $
    \dim_{\mathbb{F}_q} \Pi(A) = m \cdot \dim_{\mathbb{F}_{q^m}} A.
    $
We define the rank distance between $v,w \in  \mathbb{F}_{q^m}^n$ to be $\textup{d}_\rk(v,w):=\rk(\Pi(v)-\Pi(w))$.

\begin{definition}
    An $\mathbb{F}_{q^m}$-$[n,k]$ \emph{vector rank-metric code} is a 
    $k$-dimensional $\mathbb{F}_{q^m}$-linear subspace $\mathcal{C}$ of 
    $\mathbb{F}_{q^m}^n$. We say that $\mathcal{C}$ is an $\mathbb{F}_{q^m}$-$[n,k,d]$ vector rank-metric code if $\Pi(\mathcal{C})$ has minimum rank distance $d$, where $\Pi$ is a basis of $\mathbb{F}_{q^m}$ over $\mathbb{F}_{q}.$ 
\end{definition}

\subsection{Linear Network Coding and Nested Coset Coding}

Linear network coding has been extensively studied in \cite{informationflow,networkcoding,linearnetworkcoding,algebraicapproach}. Unlike traditional routing, linear network coding enhances both throughput and robustness by allowing intermediate nodes to linearly combine incoming data streams, rather than simply forwarding them. The problem of securing such coded communication against eavesdropping adversaries has been addressed in several works \cite{securenetwork,Universal,Emina}. These studies investigate the conditions under which an adversary, with access to a subset of network links, can be prevented from gaining information about the transmitted messages.

Consider a communication network comprising a set of source nodes and a set of sink nodes. A source node encodes a message $x\in \F_q^{\ell}$ into a matrix $C\in \F_q^{n\times m}$, where $n$ denotes the number of outgoing links from the source. The encoded message is then transmitted through the network, and a sink node receives a matrix $Y = AC \in \F_q^{N\times m}$, where $A\in \F_q^{N\times n}$ is the transfer matrix characterizing the linear transformation induced by the network between the source and the sink. If the transfer matrix $A$ is selected randomly, the scheme corresponds to random linear network coding, as introduced in \cite{1705002}.

Now, suppose that an eavesdropper gains access to $\mu$ links in the network. The adversary then observes a matrix $BC\in \F_q^{\mu \times m}$, where $B\in \F_q^{\mu \times n}$ represents the associated linear combinations of the encoded message. The communication is said to be secure against $\mu$ observations if the observed matrix $BC$ reveals no information about the original message $x$. Moreover, the scheme is termed \emph{universally secure} against $\mu$ observations if this condition holds for all choices of $B\in \F_q^{\mu \times n}$.

\begin{definition}
A \emph{coset coding scheme} over the field $\F_q$ with message set $\mathcal{S}$ is a family of disjoint non-empty subsets of $\F_q^{n\times m}$, $\mathcal{P}_\mathcal{S} = \{ \C_{x}\}_{x\in \mathcal{S}}$. The source encodes each $x\in \mathcal{S}$ by choosing $C\in \mathcal{C}_x$ uniformly at random.
\end{definition}

Here, a \emph{nested linear code pair} is a pair of rank-metric codes $\C_2 \lneq \C_1 \leq \F_q^{n\times m}$. Choose $\mathcal{W}\leq \F_q^{n\times m}$ such that $\C_1 = \C_2 \oplus \mathcal{W}$ and an $\F_q$-linear isomorphism $\psi\colon \F_q^{\ell}\rightarrow \mathcal{W}$, where $\ell = \dim (\C_1/\C_2)$. Thus, a nested linear code pair $\C_2 \lneq \C_1$ defines a coset coding scheme with message set $\F_q^{\ell}\simeq \mathcal{W}$ and $\P_{\mathcal{S}} = \{\C_x\}_{x\in\mathcal{S}}$, where $\C_x = \psi(x) + \C_2$ for $x\in \F_q^\ell$.

By abuse of notation, we let $x$ denote the uniformly distributed message in $\F_q^{\ell}$ and $BC$ the induced random variable obtained by first drawing $C$ uniformly at random from the coset $\C_x$ and then applying the eavesdropper's linear observation $B$. The quantity $I(x;BC)$ is then the mutual information between these random variables. 

\begin{proposition}[\cite{unifying}, Proposition 4] \label{prop:martinez} Let $\C_2\lneq \C_1 \leq \F_q^{n\times m}$ be a nested linear code pair and $B\in \F_q^{ n \times \mu}$. 
Let $x\in \F_q^{\ell}$ and $C\in \C_x$ be chosen uniformly at random.
Then
\[
    I(x;BC) = \dim \C_2^\perp(\rowsp(B)) - \dim \C_1^\perp(\rowsp(B)).
\]
\end{proposition}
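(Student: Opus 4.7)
The plan is to expand the mutual information using the standard decomposition $I(x;BC) = H(BC) - H(BC \mid x)$ and to recognize both terms as entropies of random variables of the type $\mathbf{Z}_{\rowsp B}$ studied in Theorem \ref{thm:entropy}, taken with respect to $\C_1$ and $\C_2$ respectively. For the unconditional term: since $x$ is uniform on $\F_q^\ell$ and $C$ is uniform on the coset $\C_x = \psi(x)+\C_2$, and since $\C_1 = \C_2 \oplus \mathcal W$ decomposes as the disjoint union of these cosets, the marginal distribution of $C$ is uniform on $\C_1$. Hence $BC$ is distributed as $BD$ for $D$ uniform on $\C_1$. For the conditional term: given $x$, we have $C = \psi(x)+C'$ with $C'$ uniform on $\C_2$, so $BC = B\psi(x)+BC'$ is a deterministic shift of $BC'$ and therefore has the same entropy as $BC'$ for $C'$ uniform on $\C_2$.

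Next I would compute $H(BD)$ for $D$ uniform on an arbitrary rank-metric code $\C$. The $\F_q$-linear map $D \mapsto BD$ on $\C$ is surjective onto its image, and its kernel consists of those $D\in\C$ whose columns lie in $\ker B = (\rowsp B)^\perp$, which is precisely $\C((\rowsp B)^\perp)$. Hence $BD$ is uniform on a set of size $q^{\dim\C - \dim\C((\rowsp B)^\perp)}$, and its entropy in $q$-ary units is $\dim\C - \dim\C((\rowsp B)^\perp)$. Equivalently, $BD$ and the coset class of $D$ in $\C/\C((\rowsp B)^\perp)$ determine one another, so one may simply invoke Theorem \ref{thm:entropy}(1). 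Applying this to $\C_1$ and $\C_2$ and subtracting gives
\[
I(x;BC) = (\dim \C_1 - \dim \C_2) - \bigl(\dim \C_1((\rowsp B)^\perp) - \dim \C_2((\rowsp B)^\perp)\bigr).
\]

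The final step is to rewrite this in terms of the dual codes. Applying Lemma~\ref{lem:ccdual} with $V = \rowsp B$ to each $\C_i$ yields
\[
\dim \C_i^\perp(\rowsp B) = m\,\rk B - \dim \C_i + \dim \C_i((\rowsp B)^\perp),
\]
and taking the difference $\dim \C_2^\perp(\rowsp B) - \dim \C_1^\perp(\rowsp B)$ cancels the $m\,\rk B$ terms and produces exactly the right-hand side of the previous display. The main thing to be careful about is the bookkeeping of dimensions: one has to keep track that the orthogonals in the definition of $\C_i(V)$ and in $(\rowsp B)^\perp$ are both taken in $\F_q^n$, and that the mutual information is expressed in $q$-ary units (i.e.\ with logarithms base $q$), which is what makes both sides of the identity integers. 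Once these conventions are fixed, the argument is essentially a dimension count combined with a single application of Lemma~\ref{lem:ccdual}.
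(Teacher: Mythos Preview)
Your argument is correct. The paper does not actually prove this proposition; it is quoted from \cite{unifying} and then used as a black box in the proof of Theorem~\ref{thm:equiv2}. So there is no ``paper's proof'' to compare against here.

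That said, your derivation is exactly the natural one and is fully consistent with the paper's conventions: entropies are taken in $q$-ary units (as in the proof of Theorem~\ref{thm:equiv2}, where $H(x)=\dim\C_1-\dim\C_2$), the kernel identification $\{D\in\C:BD=0\}=\C((\rowsp B)^\perp)$ is the same computation the paper uses elsewhere, and your final rewriting via Lemma~\ref{lem:ccdual} is precisely the identity the paper invokes in the proof of Theorem~\ref{thm:equiv2} (third equality). One cosmetic point: the proposition as stated writes $B\in\F_q^{n\times\mu}$, whereas the surrounding text and your argument use $B\in\F_q^{\mu\times n}$; this is a typo in the statement, and your reading is the intended one.
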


The quantity $\dim \C_2^\perp(\rowsp(B)) - \dim \C_1^\perp(\rowsp(B))$ captures the amount of distinguishability between the cosets $\C_x$, as observed through the adversary's perspective (i.e., the row space of $B$). If this difference is zero for all $B$ of rank at most $\mu$, then the eavesdropper gains no information about the message $x$, ensuring universal security against $\mu$ observations. This property can be guaranteed when the underlying code $\C_2$ is MRD and $\C_1$ is the ambient space \cite{Universal}.

\section{Information Leakage in Nested Coset Coding and \texorpdfstring{$q$}{q}-Polymatroids} \label{sec:leakage}
In this section, we show that any representable $q$-polymatroid gives rise to a nested coset coding scheme for which the information leakage is fully determined by the underlying $q$-polymatroid. Conversely, we establish that information leakage in linear network coding with nested coset coding can be described using $q$-polymatroids induced by the underlying nested pair of rank-metric codes.

In Theorem~\ref{thm:equiv2}, we will use the following construction. Let $\C_2 \lneq \C_1 \leq\F_q^{n\times m}$ be a pair of nested linear codes and let $k_2^\perp = \dim \C_2^\perp$. Fix a basis $\{H_1,\ldots,H_{k_2^\perp}\} \subseteq \F_q^{n\times m}$ of $\C_2^\perp$.  Define the map $\Psi\colon \C_1 \rightarrow \F_q^{(k_2^\perp + n )\times m}$ by
\begin{align} 
    \Psi(X) = \begin{bmatrix} \begin{matrix}
    \mathrm{tr}(H_1 X^\top) & \cdots & \mathrm{tr}(H_1 X^\top)  \\
    \vdots & \ddots & \vdots \\ 
    \mathrm{tr}(H_{k_2^\perp} X^\top) & \cdots & \mathrm{tr}(H_{k_2^\perp} X^\top) 
\end{matrix}  \\ X\end{bmatrix}. \label{eq:paddedcode}
\end{align}
Then for all $X\in \C_1$ we have that
\[
    X\in \C_2 \: \Longleftrightarrow \:\Psi(X) = \begin{bmatrix}
    0 \\ X
\end{bmatrix} \: \Longleftrightarrow \:\colsp( \Psi(X)) \leq \langle \mathbf{e}_{ k_2^\perp + 1},\ldots,\mathbf{e}_{k_2^\perp+n} \rangle_{\F_q} \leq \F_q^{k_2^\perp +n}.
\]
Finally, observe that $\Psi$ is injective, so the code $\Psi(\C_1) \leq \F_q^{(k_2^\perp + n)\times m}$ satisfies $\dim \Psi(\C_1) = \dim \C_1$.

\begin{theorem}
\label{thm:equiv2}
Let $\C_2 \lneq \C_1 \leq\F_q^{n\times m}$ be a pair of nested linear codes. Let $x \in \F_q^{\dim(\C_1/\C_2)}$ and let $C \in \C_x$ be an encoding of $x$ via the coset coding scheme defined by $\C_2 \lneq \C_1$, and $B\in \F_q^{\mu \times n}$. Let $k_2^\perp = \dim \C_2^\perp$ and define 
    $$Q_0= \langle \mathbf{e}_1,\ldots,\mathbf{e}_{k_2^\perp}\rangle_{\F_q}\leq \F_q^{k_2^\perp + n},\quad Q = \langle \mathbf{e}_{k_2^\perp + 1},\ldots,\mathbf{e}_{k_2^\perp+n} \rangle_{\F_q} \leq \F_q^{k_2^\perp +n},$$ 
and fix an $\F_q$-isomorphism $\tau \colon \F_q^n\rightarrow Q$. Then
\begin{equation}
    H(x\,|\, BC) =  m\,\rho_{\Psi(\C_1)}(Q_0 \,|\, \tau(\rowsp(B)), \label{eq:portscoset}
\end{equation}
where $\Psi$ is defined as in \eqref{eq:paddedcode}. In particular, if $\C_2 = \C_1(P_0^{\perp \F_q^n})$ for some $P_0\leq \F_q^n$, then
\begin{equation}
    H(x\,|\, BC) =  m \, \rho_{\C_1}(P_0\,|\,\rowsp(B)). \label{eq:portstonested}
\end{equation}
\end{theorem}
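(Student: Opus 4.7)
The plan is to reduce $H(x\mid BC)$ to a difference of dimensions of shortened subcodes of $\C_1$ and $\C_2$, identify this difference with the $q$-polymatroid conditional rank appearing in \eqref{eq:portstonested}, and then bootstrap the general statement \eqref{eq:portscoset} from \eqref{eq:portstonested} using the padding construction $\Psi$.

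First, setting $\ell = \dim(\C_1/\C_2)$ and using that $x$ is uniform on $\F_q^\ell$, the chain rule $H(x\mid BC) = H(x) - I(x;BC)$ together with Proposition~\ref{prop:martinez} yields
\[
    H(x\mid BC) = \ell + \dim\C_1^\perp(\rowsp(B)) - \dim\C_2^\perp(\rowsp(B)).
\]
Applying Lemma~\ref{lem:ccdual} to each shortened dual and subtracting, the $m\dim\rowsp(B)^\perp$ terms cancel and the constant contribution $\dim\C_1^\perp-\dim\C_2^\perp$ equals $-\ell$, producing the clean identity
\[
    H(x\mid BC) = \dim\C_1(\rowsp(B)^\perp) - \dim\C_2(\rowsp(B)^\perp),
\]
which will be the common target for both \eqref{eq:portscoset} and \eqref{eq:portstonested}. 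For \eqref{eq:portstonested}, the hypothesis $\C_2 = \C_1(P_0^{\perp\F_q^n})$ gives
\[
    \C_2(\rowsp(B)^\perp) = \C_1(P_0^\perp\cap\rowsp(B)^\perp) = \C_1((P_0+\rowsp(B))^\perp)
\]
directly from the definition of the shortened subcode, and unwinding $m\,\rho_{\C_1}(P_0\mid\rowsp(B))$ via Proposition~\ref{prop:dualpoly} produces exactly $\dim\C_1(\rowsp(B)^\perp) - \dim\C_1((P_0+\rowsp(B))^\perp)$, matching the clean identity termwise.

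For the general statement \eqref{eq:portscoset}, I would apply \eqref{eq:portstonested} to the padded nested pair $\Psi(\C_2)\lneq\Psi(\C_1)\leq\F_q^{(k_2^\perp+n)\times m}$ with the choice $P_0 = Q_0$. The chain of equivalences displayed just before the theorem says $X\in\C_2$ iff $\colsp(\Psi(X))\leq Q$, so $\Psi(\C_2) = \Psi(\C_1)(Q) = \Psi(\C_1)(Q_0^{\perp\F_q^{k_2^\perp+n}})$, which is precisely the hypothesis of \eqref{eq:portstonested}. To transfer the eavesdropper's observation, I take $\tau\colon\F_q^n\to Q$ to be the coordinate embedding $e_i\mapsto e_{k_2^\perp+i}$ and let $B' = [\,0_{\mu\times k_2^\perp}\ \ B\,]\in\F_q^{\mu\times(k_2^\perp+n)}$; then $\rowsp(B') = \tau(\rowsp(B))$, and a direct block computation gives $B'\Psi(X) = BX$ for every $X\in\C_1$. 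In particular $B'\Psi(C) = BC$ as a random matrix whenever $\Psi(C)$ is drawn uniformly from $\Psi(\C_x)$, so $H(x\mid BC) = H(x\mid B'\Psi(C))$, and \eqref{eq:portscoset} follows by applying \eqref{eq:portstonested} to the padded scheme. The proof is essentially bookkeeping; the one genuinely delicate step is this final observation-matching, where one has to check that the zero block of $B'$ exactly annihilates the syndrome rows of $\Psi(X)$ so that padding by $\Psi$ does not leak any new information to the adversary.
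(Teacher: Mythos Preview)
Your proof is correct and uses the same core ingredients as the paper (Proposition~\ref{prop:martinez}, Lemma~\ref{lem:ccdual}, and the padding map $\Psi$), arriving at the same ``clean identity'' $H(x\mid BC)=\dim\C_1(\rowsp(B)^\perp)-\dim\C_2(\rowsp(B)^\perp)$. The organizational difference is that you prove \eqref{eq:portstonested} first and then \emph{bootstrap} \eqref{eq:portscoset} by applying the special case to the padded pair $\Psi(\C_2)\lneq\Psi(\C_1)$, using the observation $B'\Psi(C)=BC$ to identify the two coset schemes; the paper instead proves \eqref{eq:portscoset} by directly computing $\dim\Psi(\C_1)(V^\perp)$ and $\dim\Psi(\C_1)((V+Q_0)^\perp)$ in terms of shortened subcodes of $\C_1$ and $\C_2$, and then derives \eqref{eq:portstonested} by a parallel direct computation. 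Your route is slightly more conceptual and avoids re-deriving the shortened-subcode identities for $\Psi(\C_1)$ from scratch, at the cost of the extra (but easy) verification that the padded coset scheme with observation $B'$ is distributionally identical to the original with observation $B$. Note that both arguments implicitly take $\tau$ to be the coordinate embedding (so that it intertwines the standard inner products on $\F_q^n$ and $Q$); this is consistent with the paper's own proof.
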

\begin{proof}
We equip $\F_q^{k_2^\perp + n}$ with the standard inner product induced by the direct-sum decomposition 
$\F_q^{k_2^\perp + n} = Q_0 \oplus Q$, that is, the form under which $Q_0$ and $Q$ are orthogonal and the restriction to each component coincides with the usual inner product on $\F_q^{k_2^\perp}$ and $\F_q^n$, respectively. We denote orthogonal complements with respect to this form by $\perp = {\perp {\F_q^{k_2^\perp + n}}}$. With this convention, $Q_0^\perp = Q$. In particular, for any subspace $V \leq Q$, there exists a subspace $W \leq Q$ such that $V^\perp = Q_0 \oplus W$. Moreover, under the $\F_q$-isomorphism $\tau \colon \F_q^n \to Q$, we have
\[
    \tau^{-1}(W)^{\perp{\F_q^n}} = \tau^{-1}(V).
\]
This follows directly from the fact that $\tau$ preserves the standard inner product up to restriction to $Q$.

Letting $\D = \Psi(\C_1)$, we obtain 
\[\dim \D(V^\perp) = \dim \D(Q_0 \oplus W ) = \dim \C_1(\tau^{-1}(W)),\]
and similarly 
$$\dim \D((V+Q_0)^\perp) = \dim\D(W) = \dim \C_2 (\tau^{-1}(W)).$$ 
Suppose now that $V\leq Q$ satisfies $\rowsp(B) = \tau^{-1}(V)$. By Proposition \ref{prop:martinez} we then have
\begin{align*}
    H(x \,|\, BC) &= H(x) - I(x;BC)\\ 
    &= \dim \C_1 - \dim \C_2 + \dim \C_1^\perp(\tau^{-1}(V))-  \dim \C_2^\perp(\tau^{-1}(V)) \\
    &= \dim \C_1(\tau^{-1}(V)^{\perp\F_q^n}) - \dim \C_2(\tau^{-1}(V)^{\perp \F_q^n}) \\
    &= \dim \D -\dim \D + \dim \C_1(\tau^{-1}(W)) - \dim \C_2(\tau^{-1}(W)) \\ 
    &= \dim \D - \dim \D((V+Q_0)^\perp) -(\dim \D - \dim \D(V^\perp)) \\
    &= m \rho_{\D}(Q_0\,|\,V),
\end{align*}
where the third equation in the above follows from Lemma \ref{lem:ccdual}.
This proves the first claim. Suppose now also that $\C_2= \C_1(P_0^{\perp \F_q^n})$ for some $P_0\leq \F_q^n$. Similarly, we get
\begin{align*}
    H(x \,|\, BC) &= \dim \C_1(\tau^{-1}(V)^{\perp\F_q^n}) - \dim \C_2(\tau^{-1}(V)^{\perp \F_q^n}) \\
    &=  \dim \C_1(\tau^{-1}(V)^{\perp\F_q^n}) - \dim \C_1(P_0^{\perp \F_q^n} \cap \tau^{-1}(V)^{\perp \F_q^n}) \\
    &= \dim \C_1 - \dim \C_1((P_0+ \tau^{-1}(V))^{\perp \F_q^n}) - (\dim \C_1 - \dim \C_1(\tau^{-1}(V)^{\perp\F_q^n})) \\
    &= m \rho_{\C_1}(P_0 + \tau^{-1}(V)) - m \rho_{\C_1}(\tau^{-1}(V)) \\
    &= m \rho_{\C_1}(P_0\,|\, \tau^{-1}(V)),
\end{align*}
which completes the proof.
\end{proof}

Equation~\eqref{eq:portstonested} shows that information leakage depends only on the conditional rank function of $\C_1$. This gives a direct information-theoretic interpretation analogous to classical linear secret sharing. In particular, consider a rank-metric code $\C \leq \F_q^{n\times m}$ and a subspace $P_0 \leq \F_q^n$ with $\rho_\C(P_0)>0$. Let $G_{P_0}$ be a full-rank matrix whose row space is $P_0$, and select some $X \in \C$ such that $G_{P_0}X$ represents the secret. The condition $\rho_\C(P_0)>0$ ensures that at least two codewords in $\C$ yield different secrets, which guarantees ambiguity. 

An adversary observing $\mu$ network links obtains $BX$, where $B \in \F_q^{\mu\times n}$ describes the transfer matrix of the compromised links. The adversary can reconstruct the secret $G_{P_0}X$ if and only if, for all $Y_1,Y_2 \in \C$, 
\[
BY_1 = BY_2 \;\Rightarrow\; G_{P_0}Y_1 = G_{P_0}Y_2,
\]
that is, whenever two codewords are indistinguishable through $B$, they must also be indistinguishable through $G_{P_0}$. Because $\C$ is linear, this is equivalent to requiring that $BY = 0$ implies $G_{P_0}Y = 0$ for all $Y \in \C$. The condition $BY=0$ means precisely that $\colsp(Y) \leq \rowsp(B)^\perp$. Hence, the adversary can reconstruct the secret if and only if
\[
\C(\rowsp(B)^\perp) = \C(\rowsp(B)^\perp \cap P_0^\perp),
\]
which is exactly the condition $\rho_\C(P_0 \,|\, \rowsp(B)^\perp) = 0$. Therefore, the conditional rank function $\rho_\C(P_0 \,|\, V)$ quantifies, in rank-metric terms, how much additional information about the secret subspace $P_0$ becomes accessible once the adversary knows the observations from $V$.

\section{\texorpdfstring{$q$}{q}-Access Structures and \texorpdfstring{$q$}{q}-Polymatroid Ports} \label{sec:ports}

Motivated by the correspondence established in Section~\ref{sec:leakage}, we now define $q$-polymatroid ports in direct analogy with matroid ports.

\begin{definition}
Let $\mathcal{M}=(E,\rho)$ be a $q$-polymatroid and $P_0,P \leq E$ such that $P_0 \oplus P = E$ and $\rho(P_0) > 0$. We define $\mathbf{S}_{P_0,P}(\mathcal{M}) = (\Gamma, \mathcal{A})$, where
\begin{align*}
    \Gamma &\coloneq \{V\leq P \, |\, \rho_{\mathcal{C}}(P_0 \,|\, V ) = 0\}, \\
    \mathcal{A} &\coloneqq \{W\leq P \,|\, \rho_{\mathcal{C}}(P_0\,|\, W ) = \rho_{\mathcal{C}}(P_0)\}.
\end{align*}
$\mathbf{S}_{P_0,P}(\mathcal{M})$ is called a \textit{generalized $q$-polymatroid port}. If $ \mathcal{M}$ is a $q$-matroid, then $\mathbf{S}_{P_0,P}(\mathcal{M})$ is a \textit{generalized $q$-matroid port}. If $\dim P_0 = 1$, we call it a \textit{$q$-polymatroid port}. If both conditions are satisfied, it is a \textit{$q$-matroid port}. 
\end{definition}

Given a nested linear code pair $\C_2\lneq \C_1\leq \F_q^{n\times m}$, by Theorem \ref{thm:equiv2} we can construct a generalized $q$-polymatroid port $\mathbf{S}'_{P'_0,P'} = (\Gamma',\A')$ such that for any $V',W'\leq \F_q^n$, 
\begin{align*}
    V' \text{ leaks all information if and only if }& \tau(V')\in \Gamma', \\
    W' \text{ leaks no information if and only if }&  \tau(W')\in \A',
\end{align*}
where $\tau\colon \L(\F_q^n) \rightarrow P'$ is any $\F_q$-isomorphism.

Conversely, suppose that $\M=(E,\rho)$ is a representable $q$-polymatroid represented by the rank-metric code $\C\leq \F_q^{n\times m}$ and that $\mathbf{S}_{P_0,P}=(\Gamma,\A)$ is a generalized $q$-polymatroid port induced by $\M$. For the coset coding scheme with the nested linear code pair $\C(P_0^\perp)\lneq \C$, by Theorem~\ref{thm:equiv2} we have  that for any $V,W\leq P$, 
\begin{align*}
    V \text{ leaks all information if and only if }& V\in \Gamma, \\
    W \text{ leaks zero information if and only if }& W\in \A.
\end{align*}

Therefore, generalized $q$-polymatroid ports provide an exact characterization of information leakage in linear network coding schemes defined by nested linear code pairs. Due to this interpretation, even if $\M$ is not representable, we will refer to $\rho(P_0)$ as the size of the secret or the message. For $p \in \P(P)$, we refer to $\rho(p)$ as the size of a packet.

\begin{proposition} \label{prop:portaccess}
Let $\S_{P_0,P}(\M)=(\Gamma,\A)$ be a generalized $q$-polymatroid port. 
\begin{enumerate}
    \item Let $V\leq P$ such that there exists $V'\in \Gamma$ with $V'\leq V$. Then $V\in \Gamma$.
    \item Let $W\leq P$ such that there exists $W'\in \A$ with $W\leq W'$. Then $W\in \A$.
\end{enumerate}
\end{proposition}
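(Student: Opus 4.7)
The plan is to first establish two elementary properties of the conditional rank function that follow directly from the $q$-polymatroid axioms, and then derive both claims as immediate consequences.

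First, I would observe that $0 \leq \rho(P_0 \,|\, V) \leq \rho(P_0)$ for every $V \leq E$. The lower bound follows from monotonicity (R2) applied to $V \leq P_0 + V$, while the upper bound is exactly submodularity (R3) applied to $P_0$ and $V$, combined with $\rho(P_0 \cap V) \geq 0$.

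Second, I would prove that the map $V \mapsto \rho(P_0 \,|\, V)$ is monotone non-increasing, i.e., if $V_1 \leq V_2 \leq E$, then $\rho(P_0 \,|\, V_1) \geq \rho(P_0 \,|\, V_2)$. Applying submodularity to $P_0 + V_1$ and $V_2$ yields
\[
\rho(P_0 + V_2) + \rho\bigl((P_0 + V_1) \cap V_2\bigr) \leq \rho(P_0 + V_1) + \rho(V_2),
\]
since $(P_0 + V_1) + V_2 = P_0 + V_2$ because $V_1 \leq V_2$. As $V_1 \leq V_2$ and $V_1 \leq P_0 + V_1$, we have $V_1 \leq (P_0 + V_1) \cap V_2$, and therefore monotonicity gives $\rho((P_0 + V_1) \cap V_2) \geq \rho(V_1)$. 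Substituting and rearranging produces
\[
\rho(P_0 + V_2) - \rho(V_2) \leq \rho(P_0 + V_1) - \rho(V_1),
\]
which is precisely $\rho(P_0 \,|\, V_2) \leq \rho(P_0 \,|\, V_1)$.

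With these two facts, both claims are immediate. For (1), if $V' \leq V \leq P$ and $V' \in \Gamma$, the monotonicity step combined with the lower bound gives $0 \leq \rho(P_0 \,|\, V) \leq \rho(P_0 \,|\, V') = 0$, so $V \in \Gamma$. For (2), if $W \leq W' \leq P$ and $W' \in \A$, the monotonicity step combined with the upper bound gives $\rho(P_0) \geq \rho(P_0 \,|\, W) \geq \rho(P_0 \,|\, W') = \rho(P_0)$, so $W \in \A$. There is no genuine obstacle here, as the proposition is a direct consequence of (R2) and (R3); the only care required is in the correct application of submodularity to establish the monotonicity of the conditional rank in its second argument.
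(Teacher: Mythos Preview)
Your proof is correct and follows essentially the same route as the paper: both arguments rest on the fact that $V \mapsto \rho(P_0 \mid V)$ is non-increasing, obtained by applying submodularity to $P_0 + V_1$ and $V_2$ (the paper phrases part (1) as a contradiction but uses the identical inequality). Your presentation is slightly cleaner in that you isolate the monotonicity of the conditional rank as a standalone step and then read off both claims directly, whereas the paper inlines the inequality chain separately for each part.
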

\begin{proof}
Let $V \in \Gamma$.  Towards a contradiction, suppose that $V' \leq P$ such that $V \leq V'$ but $V' \notin \Gamma$. Then, using submodularity of $\rho$ we get that
\[
    \rho(V') - \rho(V+P_0) < \rho(V'+P_0) - \rho(V+P_0 ) \leq \rho(V') - \rho(V).
\]
This implies $\rho(P_0 \,|\, V)>0$, which contradicts the assumption that $V \in \Gamma$. This proves 1).
To see that 2) holds, note that again by submodularity of $\rho$ we have that $\rho(P_0) \geq \rho(W+P_0)-\rho(W) \geq \rho(W'+P_0)-\rho(W').$ Since 
$W' \in \A$, we have $\rho(W'+P_0)-\rho(W')=\rho(P_0)$, from which the result follows.
\end{proof}

By Proposition \ref{prop:portaccess} we have that a generalized $q$-polymatroid port $\mathbf{S}_{P_0,P}(\M)$ forms the $q$-analogue of an access structure on $\L(P)$. We formalize this in the following.

\begin{definition}
    A subset $\Gamma\subseteq \L(E)$ is called \emph{monotone} if $V\in \Gamma$ and $V'\leq E$ with $V\leq V'$ implies $V'\in \Gamma$. Dually we define a subset $\A\subseteq E$ to be \emph{anti-monotone} if $W\in \A$ and $W\leq E$ with $W'\leq W$ implies $W'\in \A$. Two monotone sets, or two anti-monotone sets, $H_1,H_2$ are called \emph{equivalent} if there exists an $\F_q$-linear monotone bijection $f\colon H_1\rightarrow H_2$ with a monotone inverse.
\end{definition}

\begin{definition}
    Let $\Gamma,\A\subseteq \L(E)$ such that $\Gamma \cap A=\varnothing$, $\Gamma$ is monotone and $\A$ is anti-monotone. Then $\mathbf{S} =(\Gamma,\A)$ is a $q$-access structure on $\L(E)$.
\end{definition}

\begin{corollary}
A generalized $q$-polymatroid port $\S_{P_0,P}(\M)=(\Gamma,\A)$ is a $q$-access structure on $\L(P)$.
\end{corollary}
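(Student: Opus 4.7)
The plan is to verify the three defining conditions of a $q$-access structure on $\L(P)$, namely that $\Gamma$ is monotone, $\A$ is anti-monotone, and $\Gamma \cap \A = \varnothing$. The first two conditions are exactly the content of Proposition~\ref{prop:portaccess}, so the only work remaining is to establish disjointness.

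For the disjointness, I would argue by contradiction. Suppose there exists some $V \leq P$ with $V \in \Gamma \cap \A$. Then by the definition of $\Gamma$ we have $\rho(P_0 \mid V) = 0$, while from membership in $\A$ we have $\rho(P_0 \mid V) = \rho(P_0)$. Combining these forces $\rho(P_0) = 0$, which contradicts the standing assumption $\rho(P_0) > 0$ built into the definition of a generalized $q$-polymatroid port. Hence $\Gamma \cap \A = \varnothing$.

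There is no real obstacle here: once the monotonicity lemma is in place, disjointness is an immediate consequence of the hypothesis $\rho(P_0)>0$ in the definition of $\S_{P_0,P}(\M)$. The entire argument should fit in a few lines and essentially just bundles Proposition~\ref{prop:portaccess} with this one-line contradiction.
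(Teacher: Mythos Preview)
Your proposal is correct and matches the paper's approach: the paper simply states that the corollary follows from Proposition~\ref{prop:portaccess} without giving a separate proof, and you have filled in the one missing detail (disjointness via $\rho(P_0)>0$) that the paper leaves implicit.
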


For a $q$-access structure $\S=(\Gamma,\A)$ we call $\Gamma$ the \emph{reconstructing structure} of $\S$ and $\A$ the \emph{privacy structure} of $\S$. Elements of $\Gamma$ are called \emph{reconstructing spaces} and elements of $\A$ are called \emph{privacy spaces}. We define two $q$-access structures to be equivalent if their reconstructing structures and privacy structures are equivalent under the same bijection, as formalised in the following.

\begin{definition} Let $\mathbf{S}_i = (\Gamma_i,\A_i)$ be $q$-access structures on $\L(E_i)$ for $i=1,2$. If there exists an $\F_q$-linear monotone bijection $f\colon \L(E_1) \rightarrow \L(E_2)$ with a monotone inverse such that $f(\Gamma_1) = \Gamma_2$ and $f(\A_1) =\A_2$, then $\mathbf{S}_1$ and $\mathbf{S}_2$ are called \textit{equivalent}, and we denote this by $\S_1 \simeq \S_2$.
\end{definition}

We now define minors on monotone and anti-monotone sets. These definitions are then naturally extended to minors on $q$-access structures. The duality between restriction and contraction is given in Theorem \ref{thm:minors}, analogous to the case of $q$-polymatroids \cite{BYRNE2024105799,DBLP:journals/corr/abs-2104-06570}.

\begin{definition} Let $H\subseteq \mathcal{L}(E)$ be a monotone or anti-monotone set and let $Z\leq E$. Then 
\begin{enumerate}
    \item $\overline{H} = \{V \leq E\,|\, V\notin H\}$ is called the \textit{complement} of $H$.
    \item $H^*=\{ V\leq E\,|\, V^{\perp}\in H \}$ is called the \textit{dual} of $H$.
    \item $H|_Z = \{ V \leq Z\,|\, V\in H\}$ is called the \textit{restriction} of $H$ to $Z$.
    \item $H/Z = \{V \leq E/Z \,|\, \pi^{-1}(V)\in H\}$ is called the \textit{contraction} of $H$ by $Z$, where $\pi\colon E\rightarrow E/Z$ is the canonical quotient map.
\end{enumerate}
\end{definition}
The dual of a monotone set is clearly anti-monotone, and vice-versa. Similarly, both monotonicity and anti-monotonicity are preserved under restriction and contraction. Hence, we obtain $q$-access structures from a given one, by dualization, contraction, and restriction.

\begin{definition} Let $\mathbf{S}=(\Gamma,\mathcal{A})$ be a $q$-access structure on $\mathcal{L}(E)$ and $Z\leq E$. Then
\begin{enumerate}
    \item $\mathbf{S}^* = (\mathcal{A}^*,\Gamma^*)$ is called the \textit{dual} $q$-access structure of $\mathbf{S}$ .
    \item $\mathbf{S} |_Z = (\Gamma|_Z,\mathcal{A}|_Z)$ is called the \textit{restriction} of $\mathbf{S}$ to $Z$.
    \item $\mathbf{S}/Z = (\Gamma/Z,\mathcal{A}/Z)$ is called the \textit{contraction} of $\mathbf{S}$ by $Z$.
\end{enumerate}
\end{definition}

Note that in constructing $\mathbf{S}^*$, the duals of $\Gamma$ and $\mathcal{A}$ are obtained with respect to the same scalar product. Clearly, we have $\mathbf{S}^{**}=\mathbf{S}$. Additionally, the equivalence class of $\mathbf{S}^*$ is independent of the choice of non-degenerate symmetric bilinear form.

 As stated before, the class of $q$-access structures is closed under dualization and taking minors. In particular, $\mathbf{S}|_Z$ is a $q$-access structure on $\mathcal{L}(Z)$ and $\mathbf{S}/Z$ is a $q$-access structure on $\mathcal{L}(E/Z)$. Intuitively, the restriction to $Z$ describes how the $q$-access structure changes when information outside $Z$ is removed from the system, while the contraction by $Z$ describes how a $q$-access structure changes when $Z$ is made public.

 \begin{theorem} \label{thm:minors}
Let $\mathbf{S}$ be a $q$-access structure on $\mathcal{L}(E)$. Then
\[
    (\mathbf{S}/Z)^* \simeq \mathbf{S}^*|_{Z^{\perp}} \quad \text{and} \quad(\mathbf{S}| _{Z^{\perp}})^* \simeq \mathbf{S}^*/Z.
\]
\end{theorem}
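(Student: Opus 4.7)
The plan is to prove the first identity by constructing an explicit $\F_q$-linear monotone bijection $\L(E/Z)\to\L(Z^{\perp})$ that sends $(\mathbf{S}/Z)^{*}$ to $\mathbf{S}^{*}|_{Z^{\perp}}$, and then to derive the second identity from the first by duality.

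For the first identity I exploit the stated independence of the equivalence class of $\mathbf{S}^{*}$ from the choice of non-degenerate symmetric bilinear form, and choose a form on $E$ for which $Z\cap Z^{\perp}=0$. Such a form always exists over $\F_q$: pick any complement $Z'$ of $Z$, equip each of $Z$ and $Z'$ with a non-degenerate form, and declare them orthogonal. Then $E=Z\oplus Z^{\perp}$, the restriction $\pi|_{Z^{\perp}}$ of the canonical projection $\pi\colon E\to E/Z$ is an $\F_q$-isomorphism, and the form restricts to a non-degenerate form on $Z^{\perp}$. Let $g\colon E/Z\to Z^{\perp}$ be the inverse of $\pi|_{Z^{\perp}}$, and equip $E/Z$ with the form transported through $g$, so that $g$ is an isometry and $g(V^{\perp_{E/Z}})=g(V)^{\perp_{Z^{\perp}}}$ for all $V\leq E/Z$.

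The technical heart of the argument is the identity
\begin{equation*}
V'^{\perp_E}=Z\oplus V'^{\perp_{Z^{\perp}}} \qquad \text{for every }V'\leq Z^{\perp},
\end{equation*}
which follows from $Z\leq V'^{\perp_E}$, $V'^{\perp_{Z^{\perp}}}\leq V'^{\perp_E}$, $Z\cap V'^{\perp_{Z^{\perp}}}\leq Z\cap Z^{\perp}=0$, and a dimension count. Combining this with $\pi^{-1}(V^{\perp_{E/Z}})=Z+g(V)^{\perp_{Z^{\perp}}}$ (immediate from $g=(\pi|_{Z^{\perp}})^{-1}$), for any subset $H\subseteq\L(E)$ we obtain the chain
\begin{align*}
V\in(H/Z)^{*} &\iff \pi^{-1}(V^{\perp_{E/Z}})\in H \\
&\iff g(V)^{\perp_E}\in H \\
&\iff g(V)\in H^{*} \\
&\iff g(V)\in H^{*}|_{Z^{\perp}},
\end{align*}
where the last step uses $g(V)\leq Z^{\perp}$. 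Applying this to the reconstructing structure $\Gamma$ and the privacy structure $\mathcal{A}$ of $\mathbf{S}$ yields $(\mathbf{S}/Z)^{*}\simeq\mathbf{S}^{*}|_{Z^{\perp}}$.

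The second identity follows from the first by substituting $\mathbf{S}^{*}$ for $\mathbf{S}$ and using $\mathbf{S}^{**}=\mathbf{S}$: one obtains $(\mathbf{S}^{*}/Z)^{*}\simeq\mathbf{S}|_{Z^{\perp}}$, and dualizing both sides (which preserves equivalence, since the same bijection $g$ realizes the dual equivalence once the forms are transported accordingly) gives $\mathbf{S}^{*}/Z\simeq(\mathbf{S}|_{Z^{\perp}})^{*}$. The main obstacle is bookkeeping over the bilinear forms on the three ambient spaces $E$, $Z^{\perp}$, and $E/Z$; the freedom to choose the form on $E$ so that $Z\cap Z^{\perp}=0$ and to transport it through $g$ is what collapses the entire argument to the single orthogonal decomposition identity above.
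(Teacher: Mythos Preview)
Your proof is correct and follows essentially the same approach as the paper: both choose a form with $Z\oplus Z^{\perp}=E$, use the isomorphism $E/Z\cong Z^{\perp}$ induced by $\pi|_{Z^{\perp}}$, and reduce the first identity to the relation $\pi^{-1}(V^{\perp_{E/Z}})=g(V)^{\perp_E}$, deriving the second identity by biduality. Your map $g$ is exactly the paper's bijection $\sigma\colon V\mapsto \pi^{-1}(V^{\perp_{E/Z}})^{\perp}$, and your identification of it directly as $(\pi|_{Z^{\perp}})^{-1}$ makes the monotonicity and invertibility immediate rather than requiring a separately constructed inverse $\tau$.
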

\begin{proof} We prove the first equivalence as the second then follows trivially by biduality (i.e. it is the dual of the first statement). Let $\pi\colon E \rightarrow E/Z$ denote the canonical quotient map. It is possible to choose a bilinear form such that $Z\oplus Z^{\perp} = E$ and there exists a monotone isomorphism $\phi\colon E/Z\rightarrow Z^{\perp}$ with a monotone inverse satisfying $\pi^{-1}(V)^{\perp} = \phi(V)^{\perp} \cap Z^{\perp}$ for all $V\in E/Z$, see \cite[Theorem 5.3]{DBLP:journals/corr/abs-2104-06570}. In particular,
\begin{align*}
    (\Gamma/Z)^* &= \{V\leq E/Z\,|\, \pi^{-1}(V^{\perp E/Z}) \in \Gamma \},\\
    \Gamma^*|_{Z^{\perp}} &= \{ W\leq Z^\perp \, |\, W^{\perp}\in \Gamma\}.
\end{align*}
Similarly. 
\begin{align*}
    (\A/Z)^* &= \{V\leq E/Z\,|\, \pi^{-1}(V^{\perp E/Z}) \in \A \},\\
    \A^*|_{Z^{\perp}} &= \{ W\leq Z^\perp  \, |\, W^{\perp}\in \A\}.
\end{align*} Consider now the maps $\sigma\colon \mathcal{L}(E/Z) \rightarrow \mathcal{L}(Z^{\perp})$ such that $V\mapsto \pi^{-1}(V^{\perp E/Z})^{\perp}$
and $\tau \colon \mathcal{L}(Z^{\perp})\rightarrow \mathcal{L}(E/Z)$ such that $W\mapsto (\phi^{-1}((W+Z)^{\perp}))^{\perp E/Z}$. Both $\sigma$ and $\tau$ are compositions of $\F_q$-linear maps and as such they are $\F_q$-linear too. The maps $\sigma$ and $\tau$ are both monotone as they are compositions of monotone maps and an even number of anti-monotone maps. Suppose now that $W\leq Z^\perp$, so
\begin{align*}
    \sigma ( \tau (W)) &= \sigma ( \phi^{-1}((W+Z)^\perp)^{\perp E/Z}) \\
    &= \pi^{-1}(\phi^{-1}((W+Z)^\perp))^\perp \\
    &= \phi(\phi^{-1}(W^\perp \cap Z^\perp))\cap Z^\perp \\
    &= (W^\perp \cap Z^\perp)^\perp \cap Z^\perp \\
    &= W + (Z\cap Z^\perp) \\ 
    &= W,
\end{align*}
where the penultimate equality follows by modularity. Similar arguments show that $\tau(\sigma(V)) = V$ for $V\leq E/Z$, so $\sigma$ and $\tau$ are mutually inverse. The proof then follows as $\sigma((\Gamma/Z)^*) = \Gamma^*|_{Z^{\perp}}$ and $\sigma((\mathcal{A}/Z)^*) = \mathcal{A}^*|_{Z^{\perp}}$.
\end{proof}

Finally, we introduce nomenclature to describe $q$-access structures as generalizations of definitions of classical access structures \cite{beimel2011secret}.
\begin{definition}
    Let $\S=(\Gamma,\A)$ be a $q$-access structure on $\L(E)$. 
    \begin{enumerate}
        \item $\S$ is \textit{degenerate} if $\Gamma=\varnothing$ or $\A=\varnothing$.
        \item The \textit{minimal reconstructing structure} of $\S$ is $\Gamma_{\min} = \{V\in \Gamma \,|\, \text{for all } W<V \Rightarrow W \notin \Gamma\}$.
        \item $\S$ is \textit{perfect} if $\overline{\Gamma} = \A$, or equivalently, $\Gamma \cup \A = \L(E)$.
        \item The \textit{minimum gap} of $\S$ is $g(\S) = \min\{ \dim(V/W) \,|\, (V,W) \in \Gamma \times A\colon W<V\}.$
        \item $\mathbf{S}$ is \textit{connected} if for all $p\in \P(P)$ then there exists $V\in \Gamma_{\min}$ such that $p\leq V$.
        \item $\S$ is $k$-\textit{threshold} if there exists $k\in \mathbb{N}$ such that $\Gamma = \{ V\leq E \,| \, \dim V \geq k\}.$
    \end{enumerate}
\end{definition}

 Suppose $\S=(\Gamma,\A)$ is a perfect $q$-access structure. Then the dual of $\S=(\Gamma,\A)$ is also perfect and satisfies $\S^*=(\overline{\Gamma^*},\overline{\A^*})$. Furthermore, $\S$ is entirely determined by $\Gamma_{\min}$, and all of its minors are perfect. The notion of perfect encapsulates the fact that either one obtains all information or no information; there is no partial information to be obtained. Perfect $q$-access structures have a minimum gap of $1$.

For the remainder of the section we only consider properties of $q$-access structures induced by generalized $q$-polymatroid ports. These results are the $q$-analogues of classical results pertaining to matroid theory and secret sharing \cite{cryptoeprint:2012/595}.

We first generalize the notion of information ratio to generalized $q$-polymatroid ports and demonstrate how it relates to the minimum gap. The information ratio is a measure of the size of each packet compared to the size of the secret, or message.

\begin{definition}
    Let $\mathbf{S} = \mathbf{S}_{P_0,P}(\mathcal{M})$ be a generalized $q$-polymatroid port. Then $$\sigma(\mathbf{S}) = \max_{p \in \mathcal{P}(P)}\{ \rho(p)\}/\rho(P_0)$$ is the \emph{information ratio} of $\mathbf{S}$.
\end{definition}

\begin{proposition} \label{prop:gap}
Let $\mathbf{S} = \mathbf{S}_{P_0,P}(\mathcal{M})$ be a generalized $q$-polymatroid port. Then $\sigma(\mathbf{S})\geq g(\mathbf{S})^{-1}$.
\end{proposition}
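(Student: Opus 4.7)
The plan is to bound $\rho(P_0)$ by the total rank picked up along a chain from a privacy space to a reconstructing space of minimal gap, with each step contributing at most the rank of a single projective point. Fix a pair $(V,W)\in \Gamma\times \A$ with $W<V$ and $\dim(V/W)=g(\mathbf{S}) = g$; such a pair exists provided $\S$ is non-degenerate (otherwise the inequality holds trivially). Introduce the conditional rank $f(U) \coloneqq \rho(P_0 \,|\, U) = \rho(U+P_0)-\rho(U)$, so that by definition of $\Gamma$ and $\A$ we have $f(V)=0$ and $f(W)=\rho(P_0)$.

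Next I would pick any maximal chain $W=W_0 < W_1 < \cdots < W_g = V$ with $\dim(W_{i+1}/W_i)=1$ and write $W_{i+1}=W_i+p_{i+1}$ for some $p_{i+1}\in \P(P)$. The key step is to show that
\[
    f(W_i) - f(W_{i+1}) \leq \rho(p_{i+1})
\]
for every $i$. Expanding, this difference equals $\bigl(\rho(W_{i+1})-\rho(W_i)\bigr) - \bigl(\rho(W_{i+1}+P_0)-\rho(W_i+P_0)\bigr)$. Monotonicity (R2) makes the second bracket non-negative, and submodularity (R3) applied to $W_i$ and $p_{i+1}$, together with $\rho(W_i\cap p_{i+1})\geq 0$, bounds the first bracket above by $\rho(p_{i+1})$.

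Telescoping the chain then yields
\[
    \rho(P_0) = f(W)-f(V) = \sum_{i=0}^{g-1}\bigl(f(W_i)-f(W_{i+1})\bigr) \leq \sum_{i=0}^{g-1} \rho(p_{i+1}) \leq g\cdot \max_{p\in\P(P)}\rho(p),
\]
which rearranges to $\sigma(\mathbf{S}) \geq g(\mathbf{S})^{-1}$, as required. I do not foresee a serious obstacle: the whole argument is a clean telescoping on the conditional rank function, relying only on axioms (R2) and (R3). The only subtlety is checking that the one-step inequality really does follow from submodularity rather than the stronger statement $\rho(U+p)-\rho(U)\leq 1$ (which fails for $q$-polymatroids since $\rho(p)$ need not equal $1$); using $\rho(p_{i+1})$ as the per-step bound is precisely what makes the information ratio $\sigma(\mathbf{S})$ appear in the final estimate.
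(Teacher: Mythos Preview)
Your proof is correct and follows essentially the same approach as the paper. The paper streamlines the argument slightly by choosing a single complement $Y$ of $W$ in $V$ and deriving $\rho(P_0)\leq\rho(Y)\leq\dim Y\cdot\max_{p\in\P(P)}\rho(p)$ in one line, rather than telescoping the conditional rank along a maximal chain, but the underlying use of (R2) and (R3) is identical.
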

\begin{proof}
Let $V\in \Gamma$ and $W\in \mathcal{A}$ with $W\leq V$ and fix some $Y\leq P$ such that $W\oplus Y = V$. Then
\[
    \rho(P_0) + \rho(W) = \rho(W+P_0) \leq \rho(V+P_0) = \rho(V) \leq \rho(W) + \rho(Y),
\]
from which it follows that $\rho(P_0)\leq \rho(Y) \leq \dim Y \max_{p\in \mathcal{P}(P)}\{\rho(p)\}$. The result then follows by taking the minimum over all such pairs $(V,W)$.
\end{proof}

Therefore, by Proposition \ref{prop:gap}, all generalized $q$-polymatroid ports with a minimum gap of $1$ have an information ratio of at least $1$. We thus conclude that if the size of the secret $\rho(P_0)$ is larger than the size of every share, then the port is not perfect.

\begin{corollary}
Let $\S=\S_{P_0,P}(\M)$ be a generalized $q$-polymatroid port. Suppose that $\sigma(\S) <1$. Then $\S$ is not perfect.
\end{corollary}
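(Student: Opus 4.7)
The plan is a direct contradiction argument combining Proposition~\ref{prop:gap} with the observation, made in the paragraph immediately preceding the corollary, that every perfect $q$-access structure has minimum gap equal to $1$.

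First I would translate the hypothesis $\sigma(\S)<1$ into a lower bound on the minimum gap. Proposition~\ref{prop:gap} gives $\sigma(\S)\geq g(\S)^{-1}$, so the chain $g(\S)^{-1}\leq \sigma(\S)<1$ immediately yields $g(\S)>1$.

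Next I would suppose for contradiction that $\S$ is perfect and show that this forces $g(\S)=1$. To this end, pick a minimal reconstructing space $V\in\Gamma_{\min}$ (which exists whenever $\Gamma$ is non-empty, since $\L(P)$ is a finite lattice). For any hyperplane $W<V$, the minimality of $V$ forces $W\notin\Gamma$, and perfection ($\overline{\Gamma}=\A$) then places $W\in\A$. Since $\dim(V/W)=1$, the pair $(V,W)$ is admissible in the definition of the minimum gap, giving $g(\S)\leq 1$; together with the trivial bound $g(\S)\geq 1$ this yields $g(\S)=1$, contradicting $g(\S)>1$. Hence $\S$ is not perfect.

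The only delicate point, rather than a serious obstacle, is the degenerate case $\Gamma=\varnothing$, in which $g(\S)$ is formally a minimum over the empty set. This case is implicitly excluded by the preceding remark that perfect $q$-access structures have minimum gap $1$, so the argument above completes the proof under the standing conventions of the paper; if desired, one can state the corollary under the additional hypothesis that $\S$ is non-degenerate.
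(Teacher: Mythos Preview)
Your argument is correct and matches the paper's approach: the paper does not give a formal proof but simply observes, immediately before the corollary, that perfect $q$-access structures have minimum gap $1$, so Proposition~\ref{prop:gap} forces $\sigma(\S)\geq 1$ in the perfect case. Your write-up spells out explicitly why perfection implies $g(\S)=1$, which is a welcome addition, and your remark on the degenerate case $\Gamma=\varnothing$ is a fair caveat that the paper leaves implicit.
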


Since we have introduced the notions of contraction, restriction, duality, and equivalence for $q$-access structures, a natural question to ask is how these relate to generalized $q$-polymatroid ports and to the corresponding notions of the underlying $q$-polymatroid. In particular, the following result shows that contracting or restricting the $q$-access structure induced by a $q$-polymatroid is equivalent to first contracting or restricting the $q$-polymatroid itself, and then inducing a generalized $q$-polymatroid port.

\begin{proposition}\label{prop:minors}
    Let $\mathbf{S}_{P_0,P}(\mathcal{M})$ be a generalized $q$-polymatroid port and let $Z\leq P$. Then $\mathbf{S}_{P_0,Z}(\mathcal{M}|_{P_0 + Z})$ is a generalized $q$-polymatroid port and
    $$\mathbf{S}_{P_0,P}(\mathcal{M})|_Z=\mathbf{S}_{P_0,Z}(\mathcal{M}|_{P_0 + Z}).$$
    Furthermore, if $Z\in \overline{\Gamma}$ then $\mathbf{S}_{\pi(P_0),\pi(P)}(\mathcal{M}/Z)$ is a generalized $q$-polymatroid port and
    $$\mathbf{S}_{P_0,P}(\mathcal{M})/Z = \mathbf{S}_{\pi(P_0),\pi(P)}(\mathcal{M}/Z),$$
    where $\pi\colon E\rightarrow E/Z$ is the canonical quotient map.
\end{proposition}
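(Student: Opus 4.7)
The plan is to unfold the two claims by direct computation with the conditional rank function, handling the restriction first because it is essentially bookkeeping, and then the contraction where the quotient map complicates things.

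For the restriction statement, I first check that $\mathbf{S}_{P_0,Z}(\mathcal{M}|_{P_0+Z})$ is a well-defined generalized $q$-polymatroid port. Since $P_0 \cap Z \leq P_0 \cap P = 0$, we have $P_0 \oplus Z = P_0 + Z$, and $\rho|_{P_0+Z}(P_0) = \rho(P_0) > 0$ by assumption. For the equality, I note that for every $V \leq Z$ the subspaces $V$, $P_0$ and $V+P_0$ all lie inside $P_0+Z$, so $\rho|_{P_0+Z}$ coincides with $\rho$ on them. Hence
\[
\rho|_{P_0+Z}(P_0 \,|\, V) = \rho(P_0 \,|\, V), \qquad \rho|_{P_0+Z}(P_0) = \rho(P_0),
\]
so the reconstructing and privacy conditions defining the two sides are literally the same predicate applied to $V$.

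For the contraction case with $Z \in \overline{\Gamma}$, I first verify that $\mathbf{S}_{\pi(P_0),\pi(P)}(\mathcal{M}/Z)$ is well-defined. Using modularity, since $Z \leq P$,
\[
(P_0+Z) \cap P = (P_0 \cap P) + Z = Z,
\]
so $\pi(P_0) \cap \pi(P) = Z/Z = 0$, and $\pi(P_0) + \pi(P) = \pi(E) = E/Z$, giving the required direct sum decomposition. The rank of the "new secret" is
\[
\rho_{E/Z}(\pi(P_0)) = \rho(P_0 + Z) - \rho(Z) = \rho(P_0 \,|\, Z),
\]
which is strictly positive precisely by the hypothesis $Z \in \overline{\Gamma}$.

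For the equality of $q$-access structures, the key observation is that any $V \leq \pi(P)$ corresponds uniquely to $W := \pi^{-1}(V)$ satisfying $Z \leq W \leq P$, and then
\[
\rho_{E/Z}(\pi(P_0) \,|\, V) = [\rho(P_0+W) - \rho(Z)] - [\rho(W) - \rho(Z)] = \rho(P_0 \,|\, W).
\]
This immediately gives $V \in \Gamma_{\mathbf{S}_{\pi(P_0),\pi(P)}(\mathcal{M}/Z)}$ iff $\rho(P_0 \,|\, W) = 0$ iff $W \in \Gamma$ iff $V \in \Gamma/Z$. The step I expect to be the main obstacle is the privacy condition: on the port side $V$ is a privacy space iff $\rho(P_0 \,|\, W) = \rho_{E/Z}(\pi(P_0)) = \rho(P_0 \,|\, Z)$, whereas $\A/Z$ asks for $\rho(P_0 \,|\, W) = \rho(P_0)$. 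Reconciling these requires observing that $Z \leq W$ forces $\rho(P_0 \,|\, W) \leq \rho(P_0 \,|\, Z) \leq \rho(P_0)$, so that equality $\rho(P_0 \,|\, W) = \rho(P_0)$ propagates to $\rho(P_0 \,|\, Z) = \rho(P_0)$ and hence the two conditions coincide on the relevant range. Once this identification is made, the two privacy structures match and the proposition follows.
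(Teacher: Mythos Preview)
Your restriction argument and the reconstructing-structure half of the contraction argument are correct and match the paper's (terse) proof, which simply records the identities $\rho|_{P_0+Z}(V)=\rho(V)$ and $\rho_{E/Z}(\pi(P_0)\,|\,V)=\rho(P_0\,|\,\pi^{-1}(V))$ and declares the rest immediate. You have just spelled out more of the bookkeeping.

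The gap is in your last step on the privacy structure under contraction. Your sandwich $\rho(P_0\,|\,W)\leq\rho(P_0\,|\,Z)\leq\rho(P_0)$ is correct, and from it you rightly deduce that $\rho(P_0\,|\,W)=\rho(P_0)$ forces $\rho(P_0\,|\,Z)=\rho(P_0)$. But this only yields the inclusion $\A/Z\subseteq\A'$, where $\A'$ is the privacy structure of $\mathbf{S}_{\pi(P_0),\pi(P)}(\mathcal{M}/Z)$. For the reverse inclusion you would need $\rho(P_0\,|\,Z)=\rho(P_0)$ \emph{a priori}, i.e.\ $Z\in\A$, whereas the hypothesis $Z\in\overline{\Gamma}$ only gives $\rho(P_0\,|\,Z)>0$. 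Concretely, if $Z\in\overline{\Gamma}\setminus\A$ then $\A/Z=\varnothing$ (any $W\geq Z$ lying in $\A$ would put $Z\in\A$ by anti-monotonicity), while $\A'$ always contains the zero subspace, so the two sides differ. The paper's one-line proof does not address this point either; the contraction claim as stated appears to require the stronger hypothesis $Z\in\A$ (under which your argument goes through in both directions), or else should be read as asserting equality only of the reconstructing structures.
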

\begin{proof}
The statement about restrictions follows since $\rho|_{P_0+Z}(V) = \rho(V)$ for all $V\leq P_0+Z$. The statement about contractions follows as $\rho_{E/Z}(\pi(P_0)) = \rho(P_0 \,|\, Z) >0$ and $\rho_{E/Z}(\pi(P_0)\,|\, W) = \rho(P_0\,|\, \pi^{-1}(W))$ for all $W\leq P/Z$.
\end{proof}

Similarly, equivalent $q$-polymatroids yield equivalent ports under appropriate choices of $P_0$ and $P$, the proof of which follows directly by Definition \ref{def:equivalentmatroids}.

\begin{proposition} \label{prop:equivports}
    Let $\mathcal{M},\mathcal{M'}$ be a pair of equivalent $q$-polymatroids under the $\F_q$-isomorphism $\phi$.
    Let $\mathbf{S}_{P_0,P}(\mathcal{M})$ be a generalized $q$-polymatroid port. Then $\mathbf{S}_{\phi(P_0),\phi(P)}(\mathcal{M}')$ is a generalized $q$-polymatroid port and $$\mathbf{S}_{P_0,P}(\mathcal{M}) \simeq \mathbf{S}_{\phi(P_0),\phi(P)}(\mathcal{M}').$$
\end{proposition}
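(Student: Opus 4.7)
The plan is to verify first that $\mathbf{S}_{\phi(P_0),\phi(P)}(\mathcal{M}')$ is indeed a well-defined generalized $q$-polymatroid port, and then to exhibit the monotone $\F_q$-linear bijection that witnesses equivalence. Writing $\M=(E_1,\rho_1)$ and $\M'=(E_2,\rho_2)$, the conditions for well-definedness amount to $\phi(P_0)\oplus \phi(P)=E_2$ and $\rho_2(\phi(P_0))>0$. The first is immediate because $\phi$ is an $\F_q$-linear isomorphism, so it preserves direct sums; the second is immediate from Definition~\ref{def:equivalentmatroids} since $\rho_2(\phi(P_0))=\rho_1(P_0)>0$.

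Next, I would exhibit the natural candidate bijection $f\colon \L(P)\rightarrow \L(\phi(P))$ defined by $f(V)=\phi(V)$. Because $\phi$ is an $\F_q$-linear isomorphism, the restriction $f$ is a monotone bijection with monotone inverse $V'\mapsto \phi^{-1}(V')$; it is also $\F_q$-linear in the sense required by the definition of equivalence of $q$-access structures. The main content of the proof is then the identity
\[
\rho_2\bigl(\phi(P_0)\,\big|\,\phi(V)\bigr)=\rho_1(P_0\,|\,V)\quad\text{for all } V\leq P,
\]
which follows from $\phi(P_0+V)=\phi(P_0)+\phi(V)$ together with the rank-preserving property $\rho_2\circ\phi=\rho_1$ guaranteed by equivalence.

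From this identity, the membership conditions defining the reconstructing and privacy structures transfer directly: $V$ satisfies $\rho_1(P_0\,|\,V)=0$ if and only if $\phi(V)$ satisfies $\rho_2(\phi(P_0)\,|\,\phi(V))=0$, and likewise $\rho_1(P_0\,|\,V)=\rho_1(P_0)$ if and only if $\rho_2(\phi(P_0)\,|\,\phi(V))=\rho_2(\phi(P_0))$, since $\rho_1(P_0)=\rho_2(\phi(P_0))$. Hence $f(\Gamma)=\Gamma'$ and $f(\A)=\A'$, which establishes the asserted equivalence. There is no real obstacle here; the argument is essentially bookkeeping once one observes that rank-preservation by $\phi$ automatically lifts to preservation of the conditional rank function used to define ports.
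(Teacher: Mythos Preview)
Your argument is correct and is exactly the approach the paper has in mind: the paper simply states that the proposition follows directly from Definition~\ref{def:equivalentmatroids}, and your proof is the natural unpacking of that remark, checking that $\phi$ preserves the direct-sum decomposition, the conditional rank function, and hence the reconstructing and privacy structures.
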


Dualizing the port or the underlying $q$-polymatroid yields equivalent ports in the following sense.

\begin{proposition}
    Let $\mathbf{S}_{P_0,P}(\mathcal{M})$ be a non-degenerate generalized $q$-polymatroid port with $\rho(P_0) = \dim P_0$. Then $\mathbf{S}_{P^{\perp},P_0\textcolor{white}{)}\hspace{-1.5mm}^{\perp}}(\mathcal{M}^*)$ is a generalized $q$-polymatroid port and
    \[
        \mathbf{S}_{P_0,P}(\mathcal{M})^* \simeq \mathbf{S}_{P^{\perp},P_0\textcolor{white}{)}\hspace{-1.5mm}^{\perp}}(\mathcal{M}^*).
    \]
\end{proposition}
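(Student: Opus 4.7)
The plan is to verify that $\mathbf{S}_{P^{\perp},P_0^{\perp}}(\mathcal{M}^{*})$ is a well-defined generalized $q$-polymatroid port and then, after choosing a convenient non-degenerate bilinear form on $E$, to show that the two $q$-access structures coincide on the nose, which yields the claimed equivalence via the identity map.

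For well-definedness, taking orthogonal complements of $P_0 \oplus P = E$ immediately gives $P^{\perp} \oplus P_0^{\perp} = E$ by standard linear duality. Non-degeneracy of $\mathbf{S}_{P_0,P}(\mathcal{M})$ produces some $V \in \Gamma$ with $V \leq P$, and Proposition~\ref{prop:portaccess} upgrades this to $P \in \Gamma$, so $\rho(E) = \rho(P_0 + P) = \rho(P)$. Substituting into the definition of $\rho^{*}$ then yields
\[
\rho^{*}(P^{\perp}) = \dim P^{\perp} - \rho(E) + \rho(P) = \dim P^{\perp} = \dim P_0 > 0,
\]
where the final inequality uses the hypothesis $\rho(P_0) = \dim P_0 > 0$.

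Next, I would fix a non-degenerate form on $E$ adapted to the decomposition $P_0 \oplus P = E$, namely an orthogonal direct sum of non-degenerate forms on $P_0$ and on $P$. This choice makes $P^{\perp} = P_0$ and $P_0^{\perp} = P$, so both $\mathbf{S}_{P_0,P}(\mathcal{M})^{*}$ and $\mathbf{S}_{P^{\perp},P_0^{\perp}}(\mathcal{M}^{*})$ live on $\mathcal{L}(P)$. Because the equivalence class of $\mathbf{S}^{*}$ is independent of the form used for dualization, I may dualize on $\mathcal{L}(P)$ using the restriction of the form on $E$, and I will denote $V^{\perp_P} = V^{\perp} \cap P$ for $V \leq P$. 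The orthogonality of the decomposition then yields the key identities $V^{\perp} = P_0 \oplus V^{\perp_P}$ and $(P_0 + V)^{\perp} = V^{\perp_P}$.

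The core computation then unfolds as
\[
\rho^{*}(P_0 \,|\, V) = \bigl(\dim(P_0+V) + \rho(V^{\perp_P})\bigr) - \bigl(\dim V + \rho(P_0 + V^{\perp_P})\bigr) = \dim P_0 - \rho(P_0 \,|\, V^{\perp_P}),
\]
where the $\rho(E)$ terms cancel and $P_0 \cap V = 0$ gives $\dim(P_0+V) - \dim V = \dim P_0$. Combined with $\rho^{*}(P_0) = \dim P_0 = \rho(P_0)$, this implies that $V \leq P$ is a reconstructing space of $\mathbf{S}_{P^{\perp},P_0^{\perp}}(\mathcal{M}^{*})$ iff $V^{\perp_P}$ is a privacy space of $\mathbf{S}_{P_0,P}(\mathcal{M})$, i.e.\ $V \in \mathcal{A}^{*}$; symmetrically, its privacy structure equals $\Gamma^{*}$. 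Hence $\mathbf{S}_{P^{\perp},P_0^{\perp}}(\mathcal{M}^{*})$ and $\mathbf{S}_{P_0,P}(\mathcal{M})^{*}$ coincide as $q$-access structures on $\mathcal{L}(P)$. I expect the main fiddly point to be the bookkeeping between orthogonal complements taken in $E$ versus in $P$; choosing the form so that $P_0$ and $P$ are mutually orthogonal is what makes these compatible and lets the identity $V^{\perp} = P_0 \oplus V^{\perp_P}$ collapse the computation cleanly.
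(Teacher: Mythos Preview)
Your argument is correct and uses the same core identity as the paper, namely that the conditional dual rank of $P^\perp$ given a suitable complement equals $\dim P^\perp - \rho(P_0\,|\,V)$. The paper proves this for an arbitrary form on $E$ as $\rho^*(P^\perp\,|\,V^\perp\cap P_0^\perp)=\dim P^\perp-\rho(P_0\,|\,V)$ and then builds the equivalence via the composite $V\mapsto (V^{\perp P})^\perp\cap P_0^\perp$; you instead specialise the form so that $P_0$ and $P$ are mutually orthogonal, which collapses that composite to the identity and turns the equivalence into a literal equality. The computations match once one unwinds the change of variables.

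The one point that deserves a sentence is your reduction step: you invoke form-independence for the left-hand side $\mathbf{S}^*$ (which the paper states explicitly), but you also need it for the right-hand side $\mathbf{S}_{P^\perp,P_0^\perp}(\mathcal{M}^*)$, since changing the form on $E$ changes $P^\perp$, $P_0^\perp$, and $\mathcal{M}^*$ simultaneously. This is true---if two forms are related by a self-adjoint isomorphism $A$, then $A$ gives an equivalence of the two dual $q$-polymatroids carrying $P^{\perp_2}$ to $P^{\perp_1}$, so Proposition~\ref{prop:equivports} applies---but it should be said rather than assumed. With that remark added, your proof is complete and arguably tidier than the paper's, at the cost of the extra reduction.
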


\begin{proof}
    Let $V\leq P$ and $\bot = \bot E$. Then
    $$
        \rho^*(P^\perp\,|\, V^\perp \cap P_0^\perp) = \dim P^\perp - \rho(P_0\,|\, V).
    $$
    This shows that $\mathbf{S}_{P^{\perp},P_0\textcolor{white}{)}\hspace{-1.5mm}^{\perp}}(\mathcal{M}^*)$ is a generalized $q$-polymatroid port as $\mathbf{S}_{P_0,P}(\mathcal{M})$ is non-degenerate, which yields $\rho^*(P^\perp) = \dim P^\perp$. The map $\tau \colon \mathcal{L}(P)\rightarrow \mathcal{L}(P_0^\perp)$ such that $V\mapsto V^\perp\cap P_0^\perp$ is then an anti-monotone bijection and the result follows as
    \[
    V\in \Gamma^* \Leftrightarrow V^{\perp P}\in \Gamma \Leftrightarrow \rho^*(P^\perp\,|\, \tau(V^{\perp P})) = \rho^*(P^\perp)
    \]
    and
    \[
    W\in \mathcal{A}^* \Leftrightarrow W^{\perp P}\in \mathcal{A} \Leftrightarrow \rho^*(P^\perp\,|\, \tau(W^{\perp P})) =0.
   \qedhere \]
    \end{proof}

An access structure realized by a $q$-matroid port is perfect, and its minimal access structure can be characterised by the minors of the underlying $q$-matroid.
\begin{theorem}\label{thm:qmatroidports} Let $\mathbf{S}_{P_0,P}(\mathcal{M})$ be a $q$-matroid port. Then $\mathbf{S}_{P_0,P}(\mathcal{M})$ is perfect and \begin{align*}
    \Gamma_{\min} = \{V\leq P \,|\, V \text{ is a basis of } \M|_{V+P_0} \text{ and } W+P_0 \text{ is independent in }\M\text{ for all } W<V\}.
    \end{align*}
\end{theorem}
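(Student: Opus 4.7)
The plan is to treat the two assertions (perfectness and the description of $\Gamma_{\min}$) separately, and to exploit the fact that for a $q$-matroid port we have $\dim P_0 = 1$ together with $\rho$ being integer-valued. First I would record two elementary observations that drive everything: since $0 < \rho(P_0) \leq \dim P_0 = 1$ and $\rho$ takes integer values, $\rho(P_0) = 1$; and since $P_0 \oplus P = E$ gives $P_0 \cap P = 0$, we have $P_0 \not\leq V$ and hence $\dim(V + P_0) = \dim V + 1$ for every $V \leq P$.

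For perfectness, I would apply monotonicity and the boundedness axiom to $\rho(P_0 \mid V) = \rho(V + P_0) - \rho(V)$: this is a non-negative integer, and by submodularity $\rho(P_0 \mid V) \leq \rho(P_0) = 1$. Hence it equals either $0$ or $\rho(P_0)$, placing $V$ in $\Gamma$ or in $\mathcal{A}$. This gives $\Gamma \cup \mathcal{A} = \mathcal{L}(P)$, so $\mathbf{S}_{P_0,P}(\mathcal{M})$ is perfect.

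For the characterization of $\Gamma_{\min}$, I would argue both directions. For the backward direction, assume $V$ is a basis of $\mathcal{M}|_{V+P_0}$ and $W + P_0$ is independent for every $W < V$. Then $\rho(V) = \dim V = \rho(V + P_0)$, so $V \in \Gamma$, while for $W < V$ we have $\rho(W + P_0) = \dim(W + P_0) = \dim W + 1 > \dim W \geq \rho(W)$, so $W \notin \Gamma$; thus $V \in \Gamma_{\min}$. The forward direction is the substantive part. Assuming $V \in \Gamma_{\min}$, the key step, which I expect to be the main obstacle, is showing $V$ is independent. Here I would argue by contradiction: if $V$ is dependent, pick a maximal independent subspace $B$ of $V$ (i.e., a basis of $\mathcal{M}|_V$), so that $B < V$ and $\rho(B) = \rho(V)$. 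Combining $V \in \Gamma$ with monotonicity yields
\[
\rho(B) \leq \rho(B + P_0) \leq \rho(V + P_0) = \rho(V) = \rho(B),
\]
forcing $B \in \Gamma$ and contradicting the minimality of $V$. Once $V$ is known to be independent, every $W < V$ is independent as well, and $W \notin \Gamma$ gives $\rho(W + P_0) \geq \rho(W) + 1 = \dim W + 1 = \dim(W + P_0)$, so $W + P_0$ is independent; together with $\rho(V) = \dim V = \rho(V + P_0)$ this says $V$ is a basis of $\mathcal{M}|_{V+P_0}$, completing the proof.
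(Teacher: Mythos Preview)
Your proof is correct and follows essentially the same route as the paper: perfectness via the integer-valued bound $0 \le \rho(P_0 \mid V) \le 1$, and independence of $V \in \Gamma_{\min}$ by picking a proper subspace of equal rank and reaching a contradiction with minimality. The only notable difference is that you also supply the (easy) backward inclusion, which the paper leaves implicit.
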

\begin{proof}
$\mathbf{S}_{P_0,P}(\mathcal{M})$ is perfect as $\rho$ is integer-valued and $0\leq \rho(P_0\,|\,V)\leq \rho(P_0) =1$ for any $V\leq P$. 
If $V \in \Gamma_{\min}$ then $\rho(P_0+V)=\rho(V)$ and 
$\rho(P_0+W)=\rho(P_0)+\rho(W)=1+\rho(W)$ for every $W \lneq V$. 
Suppose that $V$ is a dependent set of $\M$. Then there exists $W \lneq V$ such that $\rho(W) = \rho(V)$.
Then $\rho(V+P_0)=\rho(V)=\rho(W) = \rho(W+P_0)-1 \leq \rho(V+P_0)-1$, which is clearly impossible. It follows that $V$ is a basis of $\M|_{V+P_0}$. Furthermore, we have that $W$ is independent in $\M$ for any $W \lneq V$, which yields that 
$\dim(W+P_0) = \dim(W)+1=\rho(W)+1 =\rho(W+P_0)$, so that $W+P_0$ is an independent space of $\M$. 
\end{proof}

\begin{remark}
Whereas linear codes always induce matroids, ${\mathbb F}_q$-linear rank-metric codes do not necessarily induce $q$-matroids, but rather $q$-polymatroids. Consequently, not all ports induced by rank-metric codes are perfect. Theorem \ref{thm:qmatroidports} is the generalization of a well-known result on matroid ports, see \cite{farre_padro2007}. However, it differs in the sense that the minimal access structure of a $q$-matroid port is not the set of $V\leq P$ for which $V+P_0$ is a circuit in the underlying $q$-matroid. While this condition is sufficient, Example \ref{exmp:circuitminimal} shows it is not necessary. All examples in this paper were computed using the Magma algebra system \cite{MR1484478}. \label{remark1}
\end{remark}

\begin{example} \label{exmp:circuitminimal}
Consider $\mathcal{C} = \langle M_1,M_2,M_3,M_4\rangle_{\mathbb{F}_2}\leq \F_2^{4\times 2}$, where
\[ M_1 =
\begin{bsmallmatrix}
    1 & 0 \\
    0 & 0 \\
    1 & 1 \\
    0 & 1 \\
\end{bsmallmatrix},\:
M_2 =
\begin{bsmallmatrix}
    0 & 1 \\
    0 & 0 \\
    1 & 0 \\
    1 & 1 \\
\end{bsmallmatrix},\:
M_3 =
\begin{bsmallmatrix}
    0 & 0 \\
    1 & 0 \\
    1 & 0 \\
    1 & 0 \\
\end{bsmallmatrix},\:
M_4 =
\begin{bsmallmatrix}
    0 & 0 \\
    0 & 1 \\
    0 & 1 \\
    0 & 1 \\
\end{bsmallmatrix}.
\]
In fact $\C$ may be viewed as a vector rank-metric code over $\F_4$, generated by $m_1 = (1,0,\alpha^2,\alpha)$ and $m_2=(0,1,1,1)$ for a primitive element $\alpha \in \F_4$. Hence, $\M_\C$ is a $q$-matroid.
Let $P_0 = \langle \mathbf{e}_1 \rangle_{\mathbb{F}_2}$ and $P = \langle \mathbf{e}_2,\mathbf{e}_3,\mathbf{e}_4\rangle_{\mathbb{F}_2} $. Then $\rho_{\mathcal{C}}(P_0)>0$ so $\mathbf{S}_{P_0,P}(\mathcal{M}_{\mathcal{C}})$ is a perfect $q$-matroid port by Theorem \ref{thm:qmatroidports}. Furthermore, 
\[
    \Gamma_{\min} = \{\langle  \mathbf{e}_2+\mathbf{e}_4 \rangle_{\mathbb{F}_2},
    \langle  \mathbf{e}_2+\mathbf{e}_3 \rangle_{\mathbb{F}_2},
\langle  \mathbf{e}_3+\mathbf{e}_4 \rangle_{\mathbb{F}_2}
\}.
\]
Note, however, that the circuits of $\mathcal{M}_{\mathcal{C}}$ are $\{\langle\mathbf{e}_1, \mathbf{e}_2+\mathbf{e}_3 \rangle_{\mathbb{F}_2},\langle \mathbf{e}_2+\mathbf{e}_4,\mathbf{e}_3+\mathbf{e}_4 \rangle_{\mathbb{F}_2},$ $\langle\mathbf{e}_1,\mathbf{e}_2+\mathbf{e}_4 \rangle_{\mathbb{F}_2}$, $\langle \mathbf{e}_1+\mathbf{e}_2+\mathbf{e}_4,\mathbf{e}_3+\mathbf{e}_4 \rangle_{\mathbb{F}_2},\langle\mathbf{e}_1+\mathbf{e}_3+\mathbf{e}_4 \rangle_{\mathbb{F}_2}\}$, while $v =\langle \mathbf{e}_3+\mathbf{e}_4 \rangle_{\mathbb{F}_2} + P_0$ is not a circuit as $\langle \mathbf{e}_1+\mathbf{e}_3+\mathbf{e}_4 \rangle_{\mathbb{F}_2}\leq v$.
\end{example}

A codeword of a linear code is called \emph{minimal} if its support does not contain the support of any other linearly independent codeword. This notion was extended to vector rank-metric codes in~\cite{ALFARANO2022105658}. In the setting of classical secret sharing based on linear codes, Massey~\cite{Massey1999MinimalCA} showed that the minimal access structure of a representable matroid port is completely determined by certain minimal codewords of the dual code. Here we establish the $q$-analogue of this result for vector rank-metric codes. As in Remark~\ref{remark1}, our result provides only a sufficient condition for an arbitrary vector rank-metric code, as illustrated in Example~\ref{exmp:minimalcodewords}. The condition becomes both necessary and sufficient under the stronger assumption that the dual code is minimal. Recall that for an element $v \in \mathbb{F}_{q^{m}}^{n}$, we write $\Pi(v)$ for the matrix in $\mathbb{F}_{q}^{n \times m}$ obtained by expanding the entries of $v$ with respect to the basis $\Pi = \{\gamma_{1}, \ldots, \gamma_{m}\}$ of $\mathbb{F}_{q^{m}}$ over $\mathbb{F}_{q}$.

\begin{definition}
Let $\C\leq \F_{q^m}^n$ be a vector rank-metric code and $\Pi$ a basis of $\F_{q^m}$ over $\F_q$. A codeword $v\in \C$ is a \emph{minimal codeword} if for any $v'\in \C$ with $\colsp(\Pi(v')) \leq \colsp(\Pi(v))$ implies $v'=\alpha v$ for some $\alpha \in \F_{q^m}$. The set of minimal codewords of $\C$ is denoted by $\C_{\min}$. Furthermore, $\C$ is \emph{minimal} if $\C_{\min}=\C$. \label{def:rankmetricminimal}
\end{definition}

\begin{theorem} \label{thm:minimal}
Let $\C\leq \F_{q^m}^n$ be a vector rank-metric code and $\Pi$ be any basis of $\F_{q^m}$ over $\F_q$. Consider the $q$-matroid port $\S_{P_0,P}(\M_{\Pi(\C)})$. If there exists $X\in \C^\perp_{\min}$ such that $P_0\leq \colsp( \Pi(X))$, then $\colsp(\Pi(X)) \cap P \in \Gamma_{\min}$. Moreover, if $(\C^\perp)_{\min} = \C^\perp$ then this condition is also necessary, that is, for every $V\in \Gamma_{\min}$ there exists $X\in (\C^\perp)_{\min}$ such that $P_0\leq \colsp (\Pi(X))$ and $\colsp (\Pi(X)) \cap P =V$.
\end{theorem}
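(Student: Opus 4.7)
My plan is to split the proof into the sufficiency and necessity directions, both resting on the identity
\[
    \dim_{\F_q}(\C^\perp)(U)\;=\;m\bigl(\dim U-\rho_{\Pi(\C)}(U)\bigr),\qquad U\leq \F_q^n,
\]
which follows from Lemma~\ref{lem:ccdual} after substituting $\dim\Pi(\C)(U^\perp)=\dim\Pi(\C)-m\rho_{\Pi(\C)}(U)$ and simplifying. This identity converts independence, dependence, and the circuit property of $\M_{\Pi(\C)}$ into purely dimensional conditions on $\C^\perp$, and I use it throughout.

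For sufficiency, I take $X\in\C^\perp_{\min}$ with $P_0\leq\colsp(\Pi(X))$ and set $V=\colsp(\Pi(X))\cap P$. Since $P_0\oplus P=E$ and $\dim P_0=1$, the hypothesis forces $\colsp(\Pi(X))=P_0\oplus V$. Minimality of $X$ gives $(\C^\perp)(V+P_0)=\F_{q^m}\cdot X$, which is one-dimensional over $\F_{q^m}$, so $\dim_{\F_q}(\C^\perp)(V+P_0)=m$; since no nonzero scalar multiple of $X$ has column space contained in the $P_0$-free subspace $V$, we also have $(\C^\perp)(V)=0$. The identity then yields $\rho_{\Pi(\C)}(V+P_0)=\rho_{\Pi(\C)}(V)$, i.e.\ $V\in\Gamma$. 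To see $V\in\Gamma_{\min}$, pick any $W<V$: a codeword in $(\C^\perp)(W+P_0)\subseteq(\C^\perp)(V+P_0)$ must be a scalar multiple of $X$ and so carries column space $V+P_0\not\leq W+P_0$, forcing $(\C^\perp)(W+P_0)=0=(\C^\perp)(W)$. The identity then gives $\rho_{\Pi(\C)}(W+P_0)-\rho_{\Pi(\C)}(W)=1>0$, so $W\notin\Gamma$.

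For necessity I assume $(\C^\perp)_{\min}=\C^\perp$ and take $V\in\Gamma_{\min}$. Theorem~\ref{thm:qmatroidports} supplies $\rho_{\Pi(\C)}(V)=\dim V=\rho_{\Pi(\C)}(V+P_0)$ and independence of $W+P_0$ for every $W<V$. Feeding these rank values into the identity produces $\dim_{\F_q}(\C^\perp)(V)=0$ and $\dim_{\F_q}(\C^\perp)(V+P_0)=m$, so $(\C^\perp)(V+P_0)$ is one-dimensional over $\F_{q^m}$ and is generated by some $x\in\C^\perp$, which is minimal by hypothesis. It then suffices to establish that $\colsp(\Pi(x))=V+P_0$, for then $X=x$ satisfies the required conditions.

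The main obstacle is this last equality, which I approach by showing that $V+P_0$ is itself a circuit of $\M_{\Pi(\C)}$. Arguing by contradiction, suppose a proper circuit $C<V+P_0$ exists. The identity together with $(\C^\perp)_{\min}=\C^\perp$ produces a minimal $y\in\C^\perp$ with $\colsp(\Pi(y))=C$. If $P_0\leq C$, applying sufficiency to $y$ gives $C\cap P\in\Gamma_{\min}$ with $C\cap P<V$, contradicting the minimality of $V$. If $P_0\not\leq C$, then $C$ is the graph of a nonzero $\F_q$-linear map $\pi_V(C)\to P_0$, so $C+P_0=\pi_V(C)+P_0$; submodularity combined with Theorem~\ref{thm:qmatroidports} then forces $\pi_V(C)+P_0$ to be independent whenever $\pi_V(C)<V$, contradicting the dependence of $C\leq C+P_0$, while the boundary case $\pi_V(C)=V$ is ruled out using that $V$ is a basis of $\M|_{V+P_0}$. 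Once $V+P_0$ is a circuit, every nonzero element of $(\C^\perp)(V+P_0)=\F_{q^m}\cdot x$ must have column space exactly $V+P_0$, which completes the argument.
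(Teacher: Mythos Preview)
Your sufficiency argument is correct and parallels the paper's, repackaged cleanly through the identity $\dim_{\F_q}\Pi(\C^\perp)(U)=m\bigl(\dim U-\rho_{\Pi(\C)}(U)\bigr)$.

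For necessity, both your argument and the paper's break down at the same underlying point, though they phrase it differently. You aim to show that $V+P_0$ is a circuit and dismiss the case $P_0\not\leq C$, $\pi_V(C)=V$ with the bare assertion that it ``is ruled out using that $V$ is a basis of $\M|_{V+P_0}$.'' The paper instead claims ``$T\leq U+P_0$'' for $T=\colsp Y$ and $U=T\cap V$, a step that tacitly presupposes $P_0\leq T$. Neither step can be completed, because the conclusion is in fact false. In Example~\ref{exmp:circuitminimal} the dual vector code $\C^\perp\leq\F_4^4$ is generated by $(\alpha^2,1,1,0)$ and $(\alpha,1,0,1)$; its five projective codewords have column supports
\[
\langle e_1,e_2+e_3\rangle,\quad \langle e_1,e_2+e_4\rangle,\quad \langle e_1+e_3+e_4\rangle,\quad \langle e_2+e_3,e_3+e_4\rangle,\quad \langle e_3+e_4,e_1+e_2+e_4\rangle,
\]
which are pairwise incomparable under inclusion, so $(\C^\perp)_{\min}=\C^\perp$. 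Yet for $V=\langle e_3+e_4\rangle\in\Gamma_{\min}$ the space $V+P_0=\langle e_1,e_3+e_4\rangle$ properly contains the circuit $C=\langle e_1+e_3+e_4\rangle$, which satisfies $P_0\not\leq C$ and $\pi_V(C)=V$ --- precisely your unhandled boundary case. Moreover $(\C^\perp)(V+P_0)=\F_4\cdot(1,0,1,1)$, and this generator has column support $C\neq V+P_0$, so no $X\in\C^\perp$ has $P_0\leq\colsp(\Pi(X))$ and $\colsp(\Pi(X))\cap P=V$. Thus the necessity direction of the theorem appears to fail as stated, and neither your gap nor the paper's can be repaired without an additional hypothesis.
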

\begin{proof}
Denote $\D = \Pi(\C)$. We first prove the sufficient condition. To this end, suppose $X\in (\C^\perp)_{\min}$ such that $P_0\leq \colsp( \Pi(X))$ and let $V=\colsp(\Pi(X))$. We then have 
$P_0+V\cap P = (P_0+P) \cap V = V$ as $P_0\oplus P = \F_q^n$. Since $X\in (\C^\perp)_{\min}$, we have that $\dim \D^\perp(V\cap P) = 0$, so by Lemma \ref{lem:ccdual} and the fact that $\dim(V \cap P)= \dim(V)-1$, we obtain
\[
    \dim \D^\perp(V\cap P) = mn - \dim \D - m(n-\dim V +1) + \dim \D((V\cap P)^\perp) = 0.
\]
By Lemma \ref{lem:ccdual} and since $\dim \D^\perp(V) = m$, we have that
\begin{align*}
    \dim \D((P_0 + V\cap P)^\perp) &= \dim \D(V^\perp) \\ 
    &= \dim \D^\perp(V) - mn + \dim \D + m\dim V^\perp \\
    &= -mn + \dim \D + m(n-\dim V +1),
\end{align*}
and
\begin{align*}
    \dim \D((V\cap P)^\perp) &= \dim \D^\perp(V\cap  P) - mn + \dim \D + m\dim( (V\cap P)^\perp) \\
    &= - mn + \dim \D + m(n-\dim V+1).
\end{align*}
Therefore, $\rho_{\D}(P_0\,|\, V) =  \dim \D((P_0 + V\cap P)^\perp)-\dim \D((V\cap P)^\perp)=0$, so $V\in \Gamma$. 

Suppose now that $W<V\cap P$, so $W+ P_0 < V$, which implies $\dim \D^\perp (W+P_0) = \dim \D^\perp(W) = 0$. Again by Lemma \ref{lem:ccdual}, we get
\begin{align*}
    \D((W+P_0)^\perp) &= \dim \D^\perp(W+P_0) -mn + \dim \D + m \dim ((W+P_0)^\perp) \\&= 
    -mn + \dim \D + m (n- \dim W + 1)
\end{align*}
and
\begin{align*}
    \D(W^\perp) &= \D^\perp(W) - mn + \dim \D + m(n-\dim W)\\
    &= -mn + \dim \D + m(n-\dim W),
\end{align*}
which proves $\rho_\D(P_0\,|\, W) = \rho_\D(P_0)$, so $V\in \Gamma_{\min}$.

Suppose now also $(\C^\perp)_{\min} = \C^\perp$ and let $V\in \Gamma_{\min}$. As $\rho_\D(P_0)=1$ we have $\dim \D^\perp(P_0) = \dim \D(P_0^\perp) + m - \dim \D = 0$, so for all $W<V$ then $\dim \D^\perp(W+P_0) = \dim \D^\perp(W)$ by Theorem~\ref{thm:qmatroidports}. Choose a non-zero $Y\in \D^\perp(V+P_0)$ with $Y\notin \D^\perp(V)$, which exists as $\dim \D^\perp(V) + m = \dim \D^\perp(V+P_0)$. Let $T=\colsp Y$ and let $U=T\cap V \leq V$. Suppose, for a contradiction, that $U$ is contained in some proper subspace $W<V$. Then $T\leq U+P_0 \leq W+P_0$, so $Y\in \D^\perp(W+P_0) = \D^\perp(W) \leq \D^\perp(V)$, which contradicts the fact that $\D^\perp$ is a minimal code. Therefore, $U=T\cap V = V$, which yields $V\leq T$. Additionally, as $T\leq V+P_0$ we then have $\colsp(\Pi(Y))=V\oplus P_0$.
\end{proof}

\begin{example}
    \label{exmp:minimalcodewords}
    Let $\F_8 = \F_2[\alpha]$, where $\alpha^3+\alpha+1=0$. Consider the vector rank-metric code $\C$ generated by
    \[ G =
    \begin{bmatrix}
        1 & 0 & \alpha^2 & \alpha^3 \\
        0 & 1 & 1 & \alpha^2 \\
    \end{bmatrix}
\]
and let $P_0 = \langle \mathbf{e}_1\rangle_{\F_2}$ and $P = \langle \mathbf{e}_2,\mathbf{e}_3,\mathbf{e}_4\rangle_{\F_2}$. Then $\S_{P_0,P}(\M_{\pi(\C)})$ is a $q$-matroid port with minimal access structure
\[
    \Gamma_{\min} = \{ \langle\mathbf{e}_2+\mathbf{e}_3\rangle_{\F_2}, \langle \mathbf{e}_2,\mathbf{e}_4\rangle_{\F_2},\langle \mathbf{e}_2,\mathbf{e}_3+\mathbf{e}_4\rangle_{\F_2}, \langle \mathbf{e}_2+\mathbf{e}_4,\mathbf{e}_3\rangle_{\F_2}, \langle \mathbf{e}_3,\mathbf{e}_4\rangle_{\F_2}\}.
\]
Additionally, 
\begin{align*}
    \{\colsp(\Pi(X)) \cap P \,|\, X\in \C_{\min}^\perp \colon P_0 \leq \colsp(\Pi(X))\} = \{\langle\mathbf{e}_2+\mathbf{e}_3\rangle_{\F_2},\langle \mathbf{e}_2,\mathbf{e}_4\rangle_{\F_2},\langle \mathbf{e}_2,\mathbf{e}_3+\mathbf{e}_4\rangle_{\F_2} \},
\end{align*}
which is clearly a proper subset of $\Gamma_{\min}$.
\end{example}

Suppose that one creates a coset coding scheme using a vector rank-metric code $\C\leq \F_{q^m}^n$, where the nested pair of coset codes are $\Pi(\C)(P_0^\perp) \leq \Pi(\C)$, with $\dim P_0=1$. Theorem~\ref{thm:minimal} then shows that if $\C^\perp$ is minimal then the minimal access structure is entirely determined by the codewords of $\C^\perp$ containing $P_0$ in their support.

Just as uniform matroids correspond to threshold access structures in classical secret sharing via MDS codes, the $q$-analogue holds in the rank-metric setting. Specifically, uniform $q$-matroids, which arise from MRD codes, also yield threshold $q$-access structures. This parallel underscores the structural consistency between the classical and $q$-analogue frameworks and further highlights the role of MRD codes as rank-metric analogues of MDS codes in secret sharing.

\begin{proposition} Suppose $\S = \S_{P_0,P}(\mathcal{U}_{\dim E,k})$ is a generalized $q$-matroid port. Then $\S$ is $k$-threshold.\label{prop:uniform}
\end{proposition}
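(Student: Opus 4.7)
The plan is to compute the conditional rank function of $\mathcal{U}_{\dim E,k}$ explicitly and read off the threshold condition directly. Write $\rho(V) = \min\{\dim V, k\}$ and note that $\rho(P_0) > 0$ forces $k \geq 1$ and $\dim P_0 \geq 1$, so the port is well-defined. For any $V \leq P$, since $P_0 \cap V \leq P_0 \cap P = 0$, we have $\dim(P_0 + V) = \dim P_0 + \dim V$, whence
\[
\rho(P_0 \,|\, V) \;=\; \min\{\dim P_0 + \dim V, k\} - \min\{\dim V, k\}.
\]

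The main step is a short case analysis. If $\dim V \geq k$, then both minima equal $k$, so $\rho(P_0\,|\,V) = 0$ and $V \in \Gamma$. If $\dim V < k$, then the subtrahend equals $\dim V$, while the minuend equals $\min\{\dim P_0 + \dim V, k\} \geq \dim V + 1$ because $\dim P_0 \geq 1$ and $\dim V < k$; hence $\rho(P_0\,|\,V) \geq 1 > 0$ and $V \notin \Gamma$. Combining the two cases gives
\[
\Gamma \;=\; \{V \leq P \,|\, \dim V \geq k\},
\]
which is exactly the $k$-threshold condition.

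There is no substantive obstacle here; the only mild subtlety is that when $\dim P < k$ the reconstructing structure is empty, but this is still consistent with the $k$-threshold definition applied to subspaces of $P$.
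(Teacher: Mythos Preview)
Your proof is correct and follows essentially the same approach as the paper: compute $\rho(P_0\,|\,V)$ using the explicit rank function $\rho(V)=\min\{\dim V,k\}$ and observe that it vanishes precisely when $\dim V\geq k$. You are simply more explicit about the case analysis and about using $P_0\cap P=0$ to get $\dim(P_0+V)=\dim P_0+\dim V$, whereas the paper compresses this into a single sentence.
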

\begin{proof}
The rank function of $\mathcal{U}_{n,k}$ satisfies $\rho(V) = \min\{\dim V, k\}$ for all $V\leq E$, so for any $V\leq P$ then $\rho(P_0+V) = \rho(V)$ if and only if $\dim V \geq k$.
\end{proof}
Note for Proposition \ref{prop:uniform} that if $\dim P <k$ then $\S_{P_0,P}(\mathcal{U}_{n,k})$ is degenerate.

We next establish a $q$-analogue of the Brickell--Davenport theorem, showing that when a port of a $q$-polymatroid satisfies certain ideality and connectivity conditions, the restricted structure $\M|_P$ is in fact a $q$-matroid. We first prove some preliminary lemmas.

\begin{lemma} Let $\S_{P_0,P}(\mathcal{M})$ be a perfect and connected $q$-polymatroid port such that $\rho(P_0)= \dim P_0 = 1$. For any $V\in \A$ there exists $W\in \A$ such that $V\cap W=0$ and $V+W\in \Gamma$, with the additional minimality property that if $W'<W$ then $V+W'\in \A$.
\label{lem:brickell1}
\end{lemma}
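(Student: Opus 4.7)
My plan is to take $W$ inclusion-minimal in the family
\[
\mathcal{F} := \{ W \leq P : V \cap W = 0, \; V + W \in \Gamma \},
\]
and then separately verify $W \in \A$. First I would check that $\mathcal{F}\neq\varnothing$: by the connectivity hypothesis every $p\in\P(P)$ lies in some minimal reconstructing subspace of $P$, so monotonicity of $\Gamma$ yields $P\in\Gamma$, and any linear complement of $V$ in $P$ lies in $\mathcal{F}$. Since $\L(P)$ is finite, an inclusion-minimal $W\in\mathcal{F}$ exists. By construction $V\cap W=0$ and $V+W\in\Gamma$. For any proper $W'\lneq W$, the containment $V\cap W'\leq V\cap W=0$ is automatic, while $W'\notin\mathcal{F}$ forces $V+W'\notin\Gamma$; by perfect-ness, $V+W'\in\A$. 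This already gives the minimality property stated in the lemma.

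The core remaining step is to establish $W\in\A$. I would argue by contradiction: suppose $W\in\Gamma$. Then any $W'\lneq W$ lying in $\Gamma$ would yield $V+W'\supseteq W'\in\Gamma$ and hence $V+W'\in\Gamma$ by monotonicity, contradicting the previous paragraph. Consequently $W\in\Gamma_{\min}$, i.e.\ $\rho(P_0+W)=\rho(W)$ while $\rho(P_0+W')=\rho(W')+1$ for every $W'\lneq W$.

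To derive a contradiction I plan to invoke connectivity together with the nontriviality of $V$: fix $v\in V\setminus\{0\}$ and let $B\in\Gamma_{\min}$ with $\langle v\rangle\leq B$. Since $V\in\A$ we have $B\not\leq V$, so $B\cap V$ is a strict subspace of $B$ and therefore lies in $\A$. I then apply submodularity to the pairs $(V+P_0,\,W)$, $(W+P_0,\,V+W')$ (for $W'$ a hyperplane of $W$), and $(B,\,W+P_0)$, using the intersection identities
\[
V\cap(W+P_0)=0,\quad W\cap(V+P_0)=0,\quad (V+P_0)\cap(W+P_0)=P_0,\quad (V+W')\cap(W+P_0)=W',
\]
all of which follow from $P_0\cap P=0$ and $V\cap W=0$. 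Chaining the resulting tight rank inequalities, I expect to produce a strict subspace $W''\lneq W$ with $V+W''\in\Gamma$, contradicting the minimality of $W$ in $\mathcal{F}$.

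The principal obstacle is this last submodular bookkeeping: under $W\in\Gamma_{\min}$ all the natural inequalities become tight, and the contradiction must be extracted by carefully chaining them through the auxiliary minimal reconstructing space $B$. The essential leverage is that connectivity supplies a $B\in\Gamma_{\min}$ meeting $V$ nontrivially, and this prevents the disjoint space $W$ from simultaneously being inclusion-minimal in $\mathcal{F}$ and minimal in $\Gamma$. Indeed, without the nontriviality of $V$ the conclusion $W\in\A$ is unreachable—e.g.\ in the uniform $q$-matroid $\U_{n,k}$—so the argument must invoke this feature of $V$ precisely at this step, most naturally through the $B$ furnished by connectivity.
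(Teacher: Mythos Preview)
Your plan has a genuine gap at the step ``$W \in \A$'': it is \emph{not} true that every inclusion-minimal element of your family $\mathcal{F}$ lies in $\A$, so the contradiction you hope to extract via submodular bookkeeping does not exist in general. Concretely, suppose the port admits a one-dimensional $p \in \Gamma_{\min}$ with $p \cap V = 0$ (this is compatible with connectivity and with $V\neq 0$). Then $p \in \mathcal{F}$ since $V + p \geq p \in \Gamma$, and $p$ is minimal in $\mathcal{F}$ because its only proper subspace is $0$ and $V + 0 = V \in \A$; yet $p \in \Gamma$. Already in the Boolean-lattice specialisation this occurs: take the rank-$2$ matroid represented over $\F_2$ by the columns of $\begin{bsmallmatrix}1&0&1&1&0\\0&1&1&0&1\end{bsmallmatrix}$ on $\{0,1,2,3,4\}$, port at $0$, and $V=\{1\}$. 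The port is perfect and connected with $\Gamma_{\min}=\{\{3\},\{1,2\},\{2,4\}\}$; both $\{2\}$ and $\{3\}$ are minimal in $\mathcal{F}$, but $\{3\}\in\Gamma$. All of the submodular inequalities you list hold (some with equality) in this example, and there is no $W'' \lneq \{3\}$ with $V+W'' \in \Gamma$; the ``expected'' contradiction simply is not there.

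The paper sidesteps this by using connectivity \emph{constructively} rather than for a contradiction. It first picks a point $p \in \P(V)$ and $G \in \Gamma_{\min}$ with $p \leq G$, and lets $U$ be a complement of $G \cap V$ inside $G$. Since $G \cap V \neq 0$, one has $U \lneq G$, hence $U \notin \Gamma$ and thus $U \in \A$; moreover $U \cap V = 0$ and $V + U \supseteq G \in \Gamma$. One then minimizes only over $\mathcal{S}=\{W \leq U : V + W \in \Gamma\}$ (which contains $U$); any minimal $W\in\mathcal{S}$ automatically satisfies $W \leq U \lneq G$, whence $W \in \A$ since $G\in\Gamma_{\min}$---no further rank computations needed. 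Your auxiliary space $B$ is exactly the paper's $G$; the repair to your argument is to restrict the minimization to subspaces of the complement of $B \cap V$ in $B$, rather than to all of $\mathcal{F}$.
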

\begin{proof}
Fix $V\in \A$ and choose any $p\in \P(V)$. By connectedness of the port, there exists $G\in \Gamma_{\min}$ with $p\leq G$. In particular we have $G\cap V \geq p>0$. Choose now a complement $U< G$ of $G\cap V$ such that $U\oplus (G\cap V) = G$. $U$ is a proper subspace of $G$, so $U\notin \Gamma$ by minimality of $G$. By perfectness then $U\in \A$ and $U\cap V = 0$. As $G\leq V+U$ we have by monotonicity of $\Gamma$ that $V+U\in \Gamma$. Consider now the non-empty family $\mathcal{S} = \{ U'\leq U\,|\, V+U'\in \Gamma\}$, as $U\in \mathcal{S}$. There then exists a minimal element of $\mathcal{S}$, which we denote $W$. Clearly $W\leq U<G$, and $G\in \Gamma_{\min}$, so $W\in \A$. Additionally, by minimality of $W$ in $\mathcal{S}$ then $W'\notin \mathcal{S}$, so $V+W'\notin \Gamma$, which implies $W+W'\in \A$.
\end{proof}

\begin{lemma}
Let $\S_{P_0,P}(\mathcal{M})$ be a perfect $q$-polymatroid port such that $\rho(P_0) = \dim P_0 =1$. Suppose $W\leq P$ and $y\in \P(P)$ such that both $W\in \A$ and $y+W\in \Gamma$. Then \label{lem:brickell2}
$$
    \rho(y\,|\, W) = 1 \quad \text{and} \quad \rho(y\,|\, W+P_0)=0.
$$
\end{lemma}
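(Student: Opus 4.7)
The plan is to compute $\rho(y+W+P_0)$ in two different ways using the hypotheses on $W$ and $y+W$, obtain a single linear relation between the two quantities of interest, and then pin them down via the boundedness/monotonicity axioms of $\rho$.

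More concretely, the hypothesis $W\in\A$ together with $\rho(P_0)=1$ gives $\rho(W+P_0)=\rho(W)+1$, and the hypothesis $y+W\in\Gamma$ gives $\rho(y+W+P_0)=\rho(y+W)$. Decomposing $\rho(y+W+P_0)$ along the two chains $W\le W+P_0\le y+W+P_0$ and $W\le y+W\le y+W+P_0$, I obtain
\begin{align*}
\rho(y+W+P_0) &= \rho(W)+1+\rho(y\mid W+P_0), \\
\rho(y+W+P_0) &= \rho(W)+\rho(y\mid W).
\end{align*}
Equating these yields the key identity $\rho(y\mid W)=1+\rho(y\mid W+P_0)$.

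To finish, I invoke the axioms (R1)--(R3) of Definition~\ref{def:qpolymatroid}. Since $\dim y=1$, boundedness gives $\rho(y)\le 1$, and submodularity applied to $y$ and $W$ gives $\rho(y\mid W)=\rho(y+W)-\rho(W)\le \rho(y)-\rho(y\cap W)\le 1$. On the other hand, monotonicity gives $\rho(y\mid W+P_0)\ge 0$. Combining these bounds with the identity $\rho(y\mid W)=1+\rho(y\mid W+P_0)$ forces both inequalities to be tight, so $\rho(y\mid W)=1$ and $\rho(y\mid W+P_0)=0$, as required.

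The argument is essentially a short bookkeeping exercise in the rank axioms, so there is no real obstacle; the only subtle point is remembering to use both boundedness on $y$ (giving the upper bound of $1$ on a conditional rank when the conditioned space is one-dimensional) and non-negativity of the conditional rank simultaneously, so that the single linear relation pins down both unknowns uniquely.
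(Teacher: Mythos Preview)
Your proof is correct and follows essentially the same approach as the paper: both compute $\rho(y+W+P_0)$ in two ways (once through $W+P_0$, once through $y+W$) to obtain the identity $\rho(y\mid W)=1+\rho(y\mid W+P_0)$, and then squeeze both quantities using non-negativity of the conditional rank and the bound $\rho(y\mid W)\le 1$ from submodularity and $\dim y=1$.
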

\begin{proof} Note that
\[
    \rho(W) + \rho(P_0\,|\, W) + \rho(y\,|\, W+P_0) = \rho(y+W+P_0) = \rho(y+W) + \rho(P_0 \,|\, y+W).
\]
By submodularity and non-negativity of $\rho$ we have that
\[
    1\leq \rho(y\,|\, W+P_0) + 1 \leq  \rho(y\,|\, W) \leq 1,
\]
and the result follows.
\end{proof}
\begin{definition}
A generalized $q$-polymatroid port $\S_{P_0,P}(\M)$ is called \textit{ideal} if $\rho(P_0)= \rho(p)$ for all $p\in \P(P)$.
\end{definition} 
In terms of a coset coding scheme then the ideal generalized $q$-polymatroid ports can be seen as those for which the size of the secret is the same size as the information in each packet.

\begin{theorem}[$q$-Brickell-Davenport Theorem] Let $\S_{P_0,P}$ be an ideal, perfect and connected $q$-polymatroid port with $\rho(P_0) = \dim P_0 = 1$. Then $\M|_P$ is a $q$-matroid. \label{thm:brickelldavenport}
\end{theorem}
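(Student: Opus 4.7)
The plan is to show by strong induction on $\dim V$ that $\rho(V)\in\Z$ for every $V\le P$; from this it follows that $\M|_P$ is a $q$-matroid. The base cases $\dim V\in\{0,1\}$ are immediate from $\rho(0)=0$ and the ideal hypothesis $\rho(p)=\rho(P_0)=1$ for every $p\in\P(P)$.

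For the inductive step with $\dim V = k\ge 2$, I choose $p\in\P(V)$ together with a complement $W$ of $p$ in $V$, so $V=W+p$ and $\dim W = k-1$. The inductive hypothesis gives $\rho(W)\in\Z$, and submodularity combined with ideality forces $0\le\rho(p\mid W)\le 1$, so it suffices to prove $\rho(p\mid W)\in\{0,1\}$. By perfectness each of $W$ and $V$ lies in $\Gamma$ or $\A$, and monotonicity of $\Gamma$ rules out $W\in\Gamma$ with $V\in\A$. Three cases remain: (a) $W\in\A,\ V\in\Gamma$; (b) $W,V\in\A$; and (c) $W,V\in\Gamma$. Case (a) is immediate from Lemma \ref{lem:brickell2}, which gives $\rho(p\mid W)=1$.

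In cases (b) and (c) I plan to enlarge $W$ to a subspace $W^\sharp\le P$ with $W^\sharp\in\A$ and $W^\sharp+p\in\Gamma$, so that Lemma \ref{lem:brickell2} applies at the pair $(W^\sharp,W^\sharp+p)$ and yields $\rho(p\mid W^\sharp)=1$. Combined with the standard monotonicity of conditional rank $\rho(p\mid B)\ge\rho(p\mid B')$ whenever $B\le B'$, this transfers the lower bound to $W$ and forces $\rho(p\mid W)=1$. The witness $W^\sharp$ is produced by Lemma \ref{lem:brickell1}: in case (b) it is obtained by applying the lemma directly to $V\in\A$ to get $W^*$ with $V+W^*\in\Gamma$ and setting $W^\sharp=W+W^*$, while in case (c) it is obtained by using connectedness to select $G\in\Gamma_{\min}$ with $p\le G$, setting $H=G\cap W\in\A$ (noting $H<G$ lest $p\le W$), and walking along a one-dimensional chain from $H$ to a suitable $\Gamma$-element so as to locate an $\A\to\Gamma$ transition realised by $p$.

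The main obstacle will be controlling the enlargement so that the $\A\to\Gamma$ jump is genuinely effected by the added copy of $p$, and not by some other dimension already present in the complement produced by Lemma \ref{lem:brickell1}; in particular, one must handle the possibility that $W+W^*$ (in case (b)) or the partial extension of $W$ (in case (c)) already sits in $\Gamma$, so that the enlargement ``bypasses'' $p$. Resolving this requires a careful choice of the minimal reconstructing subspace $G\ni p$ and of the complement inside $G$ from the proof of Lemma \ref{lem:brickell1}, together with a short submodular argument using $\rho(W+G)+\rho(G\cap W)\le\rho(W)+\rho(G)$ to pin down the single dimension that carries the rank jump. Once the enlargement is aligned with $p$, Lemma \ref{lem:brickell2} yields the integer increment $\rho(p\mid W)\in\{0,1\}$, completing the induction and establishing that $\M|_P$ is a $q$-matroid.
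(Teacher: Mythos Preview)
Your inductive framing is equivalent to the paper's minimal-counterexample approach, and case~(a) is correctly dispatched by Lemma~\ref{lem:brickell2}. The gap is that in cases~(b) and~(c) you commit to proving $\rho(p\mid W)=1$, but this value can be~$0$. For a concrete instance, take the port of $\mathcal{U}_{n,2}$ with $n\ge 4$ (ideal, perfect, connected): any $W\le P$ with $\dim W\ge 2$ and any $p\in\P(P)$ with $p\not\le W$ give $W,V\in\Gamma$ and $\rho(p\mid W)=0$, which is exactly case~(c). Your enlargement plan is then impossible: $W\in\Gamma$ forces every superspace of $W$ into $\Gamma$, so no $W^\sharp\supseteq W$ with $W^\sharp\in\A$ exists; and your fallback via $H=G\cap W$ produces a space \emph{below} $W$, so the monotonicity $\rho(p\mid W)\ge\rho(p\mid W^\sharp)$ points the wrong way. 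Case~(b) has the same obstruction in disguise: whenever $\rho(p\mid W)<1$, every $W^\sharp\supseteq W$ satisfies $\rho(p\mid W^\sharp)\le\rho(p\mid W)<1$, so by Lemma~\ref{lem:brickell2} no $W^\sharp\supseteq W$ can simultaneously lie in $\A$ and have $W^\sharp+p\in\Gamma$; in particular your candidate $W+W^*$ must already be in $\Gamma$, and no ``careful choice of $G$'' can avoid this because the target equality $\rho(p\mid W)=1$ is itself false.

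The missing ingredient is the \emph{second} conclusion of Lemma~\ref{lem:brickell2}, namely $\rho(y\mid W+P_0)=0$, which you never invoke. The paper uses it in both remaining cases. For $V\in\Gamma$ it does not fix $p$ arbitrarily but selects $V'\in\Gamma_{\min}$ with $V'\le V$ and takes $y'\in\P(V')$; then $V'_{y'}\in\A$, Lemma~\ref{lem:brickell2} gives $\rho(y'\mid V'_{y'}+P_0)=0$, and since $V_{y'}\in\Gamma$ one obtains $\rho(y'\mid V_{y'})=\rho(y'\mid V_{y'}+P_0)=0$, showing the increment is~$0$, not~$1$. For $V\in\A$ the paper first uses the minimality clause of Lemma~\ref{lem:brickell1} together with a one-dimensional chain inside $W^*$ to establish $\rho(V+W^*)=\rho(V)+\dim W^*$, then chooses a minimal $Z\in\Gamma$ with $W^*\le Z\le V+W^*$ and a point $z\in\P(V\cap Z)$; again the $=0$ half of Lemma~\ref{lem:brickell2} forces $\rho(z\mid V_z+W^*)=0$, contradicting $\rho(V_z)=n<\rho(V)$. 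Your sketch contains neither the use of $P_0$ inside the conditional rank nor the specific choice of the one-dimensional direction from a $\Gamma_{\min}$-element, and without these the argument cannot close.
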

\begin{proof}
Towards a contradiction, suppose that there exists a subspace $V\leq P$ such that $\rho(V)$ is not a non-negative integer. Assume that with respect to inclusion, $V$ is a minimal subspace of $P$ with this property. Write $n= \lfloor \rho(V) \rfloor$, so $n < \rho(V) <n+1.$ For any $U\leq P$ and any $u\in \P(U)$ let $U_u$ denote a fixed hyperplane of $U$ that is a complement of $u$ in $U$, i.e. $U = U_u \oplus u$. Let $y\in \P(V)$, so $\rho(V_y)\in \mathbb{N}$, and as $\rho(y) = 1$ it follows that
\begin{align}
    \rho(V_y) \leq n <\rho(V) = \rho(V_y + y) \leq \rho(V_y) +1.\label{eq:brickell1}
\end{align}
This gives us that $n-1 <\rho(V_y) \leq n$ and since $\rho(V_y)$ is an integer we obtain $\rho(V_y) = n$. In particular,
\begin{align}
    0<\rho(y\,|\, V_y) <1. \label{eq:brickell2}
\end{align}
Since $\S_{P_0,P}$ is perfect, we have that the union of the elements of $\Gamma$ and $\A$ form a partition of $\L(P)$, and as such we consider the case that $V \in \Gamma$ and the case $V \in \A$.

Suppose first $V\in \Gamma$. For any $y\in \P(V)$ then $V_y\in \Gamma$ by Lemma \ref{lem:brickell2}, since if not, it would contradict $\eqref{eq:brickell2}$. Suppose now $V'\in \Gamma_{\min}$ such that $V'\leq V$ and let $y'\in \P(V')$. Then $V'_{y'}\in \A$ and Lemma \ref{lem:brickell2} yields $\rho(y'\,|\, V'_{y'}+P_0) = 0$, so
\[
    \rho(y'\,|\, V_{y'}) = \rho(V) - \rho(V_{y'}) = \rho(V+P_0) - \rho(V_{y'}) = \rho(y'\,|\, V_{y'}+P_0)=0,
\]
which contradicts \eqref{eq:brickell2} so $\rho(V)\in \mathbb{N}$.

Suppose now $V\in \A$. By Lemma \ref{lem:brickell1} there exists $W\in \A$ such that $V+W \in \Gamma$, $V\cap W = 0$ and if $W'<W$ then $V +W'\in \A$. Let $\dim W =\ell$ and choose $x_1,\dots,x_\ell \in \P(W)$ such that
\[
    W = x_1 \oplus x_2 \oplus \cdots \oplus x_\ell.
\]
For each $j\in [\ell]$ let $W_j = \oplus^\ell_{i=j}x_i$.  By Lemma \ref{lem:brickell2}, for any $x\in \P(W)$ we have $\rho(x\,|\, V+W_x) = 1$, which yields $\rho(x_j \,|\, V+W_{j+1})=1$ for all $j=1,\ldots,\ell-1$. We now have that
\begin{align*}
    \rho(V+W)   &= \rho(V+W_1)+\rho(x_1 \,|\, V+W_{x_1}) \\
                &= \rho(V+W_2) + \rho(x_2 \,|\, V+W_{x_2}) + \rho(x_1 \,|\, V+W_{x_1}) \\
                &\: \,\,\vdots \\
                &= \rho(V) + \sum^{\ell}_{i=1} \rho(x_i \,|\, V+W_i) \\
                &= \rho(V) + \dim W.
\end{align*}

Now, suppose $x\in \P(V)$. By Lemma \ref{lem:brickell2} we have that $V_x + W\in \Gamma$ for all $x\in \P(V)$ as $\rho(x\,|\, V_x + W)  \leq \rho(x\,|\, V_x) <1$. Furthermore, we have $\rho(V_x+W) = \rho(V) + \dim W$, as $\rho(V_x+W)\leq \rho(V) + \dim W$ holds trivially by submodularity, and $\rho(x\,|\, V_x + W) \leq \rho(x\,|\, V_x)$ implies $ \rho(V_x) + \dim W \leq \rho(V_x+W)$.

Choose $Z\in \Gamma$ that is minimal among those that contain $W$ and are contained in $V+W$. Thus, $W<Z \leq V+W$ and $V\cap Z >0$. Let $z\in \P(V\cap Z)$, so by Lemma \ref{lem:brickell2}, we have that $\rho(z\,|\, Z_z +P_0) = 0$, from which we obtain $\rho(z\,|\, V_z + W) = \rho(z\,|\, V_z +W+P_0) = 0.$ Therefore,
\[
    \rho(V) + \dim W = \rho(V+W) = \rho(V_z+W) + \rho(z\,|\, V_z+W) = \rho(V_z+W) = \rho(V_z) + \dim W,
\]
but this is a contradiction to \eqref{eq:brickell1}, which concludes the proof.
\end{proof}

\begin{corollary}
Let $\M$ be a $q$-polymatroid and let $P_0\in \P(E)$ such that $\rho(P_0)=1$. If $\S_{P_0,P}(\M)$ is an ideal, perfect, and connected $q$-polymatroid port for every complement $P$ of $P_0$ then $\M$ is a $q$-matroid. \label{cor:brickelldavenport}
\end{corollary}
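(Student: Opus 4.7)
The plan is to reduce the problem to Theorem \ref{thm:brickelldavenport} by showing that every $V\leq E$ has integer-valued rank. Since $P_0$ is one-dimensional, for every $V\leq E$ exactly one of the two cases holds: either $P_0\cap V = 0$, or $P_0\leq V$. I will handle these separately.

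In the first case, since $V\cap P_0=0$, I can extend $V$ to a complement $P$ of $P_0$ in $E$, so that $V\leq P$. By hypothesis $\S_{P_0,P}(\M)$ is ideal, perfect, and connected, so Theorem \ref{thm:brickelldavenport} gives that $\M|_P$ is a $q$-matroid. Therefore $\rho(V)=\rho|_P(V)\in \mathbb{N}_0$.

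In the second case, where $P_0\leq V$, I choose a complement $V'$ of $P_0$ inside $V$, so $V=V'\oplus P_0$ with $V'\cap P_0=0$, and then extend $V'$ to a complement $P$ of $P_0$ in $E$. By the first case applied to $V'\leq P$, we have $\rho(V')\in \mathbb{N}_0$. Now write
\[
\rho(V) = \rho(V'+P_0) = \rho(V') + \rho(P_0\,|\,V'),
\]
so it remains to show that $\rho(P_0\,|\,V')$ is an integer. This is where perfectness of $\S_{P_0,P}(\M)$ is used: since $V'\leq P$, either $V'\in \Gamma$, in which case $\rho(P_0\,|\,V')=0$, or $V'\in \A$, in which case $\rho(P_0\,|\,V')=\rho(P_0)=1$. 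Either way $\rho(P_0\,|\,V')\in\{0,1\}$, and hence $\rho(V)\in \mathbb{N}_0$.

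The main obstacle is essentially bookkeeping: verifying that the complements can always be chosen so that Theorem \ref{thm:brickelldavenport} is applicable. The substance of the argument is already contained in the $q$-Brickell--Davenport theorem; the corollary only extends the conclusion from subspaces of a complement $P$ to all subspaces of $E$ by exploiting perfectness on the two-step decomposition $V=V'\oplus P_0$.
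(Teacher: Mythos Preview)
Your proof is correct and follows essentially the same approach as the paper: cover $\L(E)$ by the complements of $P_0$ together with $P_0$ itself, and invoke Theorem~\ref{thm:brickelldavenport} on each complement. You are in fact more explicit than the paper in handling the case $P_0\leq V$, where you use perfectness to force $\rho(P_0\,|\,V')\in\{0,1\}$; the paper's proof simply asserts that $\rho$ is integer-valued on $\L(P)+\L(P_0)$ without spelling out this step.
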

\begin{proof}
For any $V,W\leq P$ let $\L(V)+\L(W) \coloneqq \{ V'+W' \,|\, V'\leq V, W'\leq W\}$. Denote the complements of $P_0$ in $E$ as $\mathbf{C}(P_0) = \{P\leq E \,|\, P\oplus P_0 = E\}$, so
\[
    \L(E) = \bigcup_{P\in \mathbf{C}(P_0)} (\L(P) + \L(P_0)).
\]
As $\S_{P_0,P}(\M)$ is an ideal, perfect and connected $q$-polymatroid port then $\rho(V)\in \mathbb{Z}$ for any $V\in (\L(P) +\L(P_0))$ by Theorem \ref{thm:brickelldavenport}, and the result follows.
\end{proof}

The requirements of Corollary \ref{cor:brickelldavenport} are evidently quite restrictive as for any $P_0 \in \mathcal{P}(E)$, there are a total of $\frac{q^{n-1}-1}{q-1}$ complements of $P_0$. Moreover, the conditions of Theorem \ref{thm:brickelldavenport} are not sufficient to ensure that the entire $q$-polymatroid is a $q$-matroid. It would therefore be of interest to identify more relaxed conditions under which the conclusion of Corollary \ref{cor:brickelldavenport} still holds.

The definition of a connected $q$-matroid remains an open problem. In the classical setting, a well-established result asserts that any connected matroid port corresponds to a unique connected matroid realizing that port \cite{Lehman}. This result plays a central role in the structural theory of matroids and their applications to secret sharing. However, a direct $q$-analogue appears unlikely. The theorems that guarantee uniqueness and existence in the classical case rely on combinatorial properties that do not readily carry over to the $q$-analogue. In particular, if a notion of connectedness can be defined for $q$-matroids, the correspondence between connected $q$-matroid ports and uniquely connected $q$-matroids may not hold.

\begin{remark}
The results of this section pertaining to general $q$-polymatroids extend naturally to finite modular, complemented lattices which include finite Boolean and subspace lattices. In this sense, the classical theory of access structures and polymatroid ports is recovered as a special case of this more general setting.
\end{remark}

\section{Conclusions and future work}
\label{sec:conclusions}

In this work, we developed a $q$-polymatroidal framework for analyzing information leakage in secure linear network coding schemes based on nested rank-metric codes. We showed that the information gained by a wiretapper observing an arbitrary collection of network links is fully characterized by the conditional rank function of a representable $q$-polymatroid associated with the underlying code pair. This provides a direct combinatorial interpretation of information leakage that is independent of the specific network topology and depends only on the algebraic structure of the code.

Building on this connection, we introduced the notions of $q$-polymatroid ports and $q$-access structures, which generalize classical matroid ports and access structures from the set-based to the subspace-based setting. These objects capture reconstructing spaces and privacy spaces in a unified manner and naturally extend several well-known correspondences from linear secret sharing. In particular, we established a rank-metric analogue of Massey’s correspondence between minimal codewords and minimal qualified sets, and proved a $q$-analogue of the Brickell–Davenport theorem, showing that ideal, perfect, and connected $q$-access structures arise from $q$-matroids under suitable structural assumptions.

By formulating information leakage in terms of $q$-polymatroid ports, we obtain a setting in which duality, minors, and minimal reconstructing spaces can be analyzed using standard $q$-polymatroid techniques. This makes it possible to study secrecy properties of rank-metric schemes using the same structural tools that are familiar from matroid-based secret sharing.

The framework developed in this paper suggests several directions for further research. From a combinatorial perspective, it is natural to further investigate the structure theory of $q$-access structures independently of representability. In classical matroid theory, excluded-minor characterizations and connectivity properties play a central role in understanding which access structures arise from linear secret sharing schemes. Developing analogous characterizations for $q$-access structures may shed light on which secrecy patterns can be realized by rank-metric codes. This could open up a new line of research trying to find and construct such specific rank-metric codes. 

Another direction concerns the classification of ideal and threshold $q$-access structures. While MRD codes give rise to perfect threshold structures, it remains open to characterize all $q$-matroids whose ports induce threshold or near-threshold behavior. Such results would parallel classical theorems on uniform matroids and ideal secret sharing and could inform the design of rank-metric schemes with optimal trade-offs between secrecy and efficiency. From the coding-theoretic viewpoint, extending the present framework to broader classes of codes is of interest. In particular, it would be valuable to study whether similar $q$-polymatroidal descriptions arise for non-linear rank-metric codes, subspace codes, or codes over more general algebraic structures.

Finally, the entropy-based characterization of representable $q$-polymatroids suggests deeper connections between information theory and $q$-polymatroid theory. Exploring whether general $q$-polymatroids admit operational interpretations in terms of entropy or mutual information, beyond the representable case, may provide new insights into abstract information inequalities in the subspace setting.

\pagebreak

\bibliographystyle{IEEEtran}
\bibliography{bibliography}

\end{document}

